\def\pp{\mathinner{\ldotp\ldotp}}
\def\sa#1{\mbox{\tt #1}}
\def\ptr{\textit{ptr}}
\def\Nil{\textit{Nil}}
 \def\PV{\mathcal{P}}
 \def\PVW{\mathcal{P}_{\s{w}}}
 \def\QV{\mathcal{Q}} 
\def\pp{\ldotp\ldotp}
\renewcommand{\epsilon}{\varepsilon}
\def\s#1{\mbox{$#1$}}
\newtheorem{theorem}{Theorem}
\newdefinition{definition}[theorem]{Definition}
\newtheorem{lemma}[theorem]{Lemma}
\newtheorem{corollary}[theorem]{Corollary}
\newtheorem{observation}[theorem]{Observation}
\newproof{proof}{Proof}
\begin{document}

\begin{frontmatter}

\title{Fast Computation of Abelian Runs}

\author[palermo]{Gabriele Fici\fnref{Gabriele}}
\ead{Gabriele.Fici@unipa.it}

\author[warsaw]{Tomasz Kociumaka\fnref{Iuventus}}
\ead{kociumaka@mimuw.edu.pl}

\author[rouen]{Thierry Lecroq}
\ead{Thierry.Lecroq@univ-rouen.fr}

\author[rouen]{Arnaud Lefebvre}
\ead{Arnaud.Lefebvre@univ-rouen.fr}

\author[rouen]{\'Elise Prieur-Gaston}
\ead{Elise.Prieur@univ-rouen.fr}

\address[palermo]{Dipartimento di Matematica e Informatica, Universit\`a di Palermo, Italy}

\address[warsaw]{Faculty of Mathematics, Informatics and Mechanics, University of Warsaw, Poland}

\address[rouen]{Normandie Universit\'e, LITIS EA4108, NormaStic CNRS FR 3638, IRIB, Universit\'e de Rouen, 76821 Mont-Saint-Aignan Cedex, France}

\fntext[Gabriele]{Supported by the  Italian Ministry of Education (MIUR) project PRIN 2010/2011  ``Automi e Linguaggi Formali: Aspetti Matematici e Applicativi''.}

\fntext[Iuventus]{Supported by the Polish Ministry of Science and Higher Education under the `Iuventus Plus' program in 2015-2016 grant no 0392/IP3/2015/73.
The author is also supported by Polish budget funds for science in 2013-2017 as a research project under the `Diamond Grant' program.}

\begin{abstract}
Given a word $w$ and a Parikh vector $\PV$, an abelian run of period $\PV$ in $w$ is a maximal occurrence of a substring of $w$ having abelian period $\PV$.
Our main result is an online algorithm that, given a word $w$ of length $n$ over an alphabet of cardinality $\sigma$ and a Parikh vector $\PV$, returns all the abelian runs of period $\PV$ in $w$
in time $O(n)$ and space $O(\sigma+p)$, where $p$ is the norm of $\PV$, i.e., the sum of its components.
We also present an online algorithm that computes all the abelian runs with periods 
of norm $p$ in $w$ in time $O(np)$, for any given norm $p$.
Finally, we give an $O(n^2)$-time offline
 randomized algorithm for computing all
 the abelian runs of $w$. Its deterministic counterpart runs in $O(n^2\log\sigma)$ time.
\end{abstract}

\begin{keyword}
Combinatorics on Words\sep Text Algorithms\sep Abelian Period\sep Abelian Run
\end{keyword}

\end{frontmatter}


\section{Introduction}

Computing maximal (non-extendable) repetitions in a word is a classical topic in the area of string algorithms (see for example \cite{Smy2013} and references therein). Maximal repetitions of substrings, also called \emph{runs}, give information on the repetitive regions of a word, and are used in many applications, for example in the analysis of genomic sequences.

Kolpakov and Kucherov \cite{KK99} gave the first linear-time algorithm for computing all the runs in a word and conjectured that any word of length $n$ contains less than $n$ runs. Recently, Bannai et al.~\cite{B14,B14bis}, using the notion of Lyndon roots of a run, proved this conjecture and designed a much simpler
algorithm computing the runs.

Here we deal with a generalization of this problem to the commutative setting. 
Recall that an abelian power is a concatenation of two or more words that have the same Parikh vector, i.e., that have the same number of occurrences of each letter of the alphabet. 
For example, $aababa$ is an abelian square, since $aab$ and $aba$ both have two $a$'s and one $b$, i.e., the same Parikh vector $\PV=(2,1)$. 
When an abelian power occurs within a word, one can search for its ``maximal'' occurrence by extending it to the left and to the right character by character without violating the condition on the number of occurrences of each letter. 
Following the approach of Constantinescu and Ilie~\cite{CI2006}, we say that a Parikh vector $\PV$ is an abelian period of a word $w$ if $w$ can be written as $w=u_0u_1 \cdots u_{k-1}u_k$ for some $k\ge 1$ where for $0<i<k$ all the $u_i$'s have the same Parikh vector $\PV$ and the Parikh vectors of $u_0$ and $u_k$ are contained in $\PV$. 
If $k>2$, we say that the word $w$ is periodic with period $\PV$. 
Note that the factorization above is not necessarily unique. For example, $a\cdot bba\cdot bba \cdot \epsilon$ and $\epsilon \cdot abb\cdot abb \cdot a$ ($\epsilon$ denotes the empty word) are two factorizations of the word $abbabba$ both corresponding to the abelian period $(1,2)$. Moreover, the same word can have different abelian periods.
 
 In this paper we define an \emph{abelian run} of period $\PV$ in a word $w$ as an occurrence of a substring $v$ of $w$ such that $v$ is periodic with abelian period $\PV$ and this occurrence cannot be extended to the left nor to the right by one letter into a substring periodic with period $\PV$.
 
For example, let $w=ababaaa$. Then the prefix $ab\cdot ab\cdot a=w[0\pp 4]$ has abelian period $(1,1)$ but it is not an abelian run since the prefix $a\cdot ba \cdot ba\cdot a=w[0\pp 5]$ also has abelian period $(1,1)$. The latter, on the other hand, is an abelian run of period $(1,1)$ in $w$.
 
 Looking for abelian runs in a word can be useful to detect regions in the word where there is some kind of non-exact repetitiveness, for example regions with several consecutive occurrences of a substring or its reversal.
 
Matsuda et al.~\cite{matsuda2014computing} recently presented an offline algorithm for computing all abelian runs of a word of length $n$ in $O(n^2)$ time.
 Notice that, however, the definition of abelian run in~\cite{matsuda2014computing} is slightly different from the one we consider here. We compare both versions in Section \ref{sect-def}.
Basically, our notion of abelian run is more restrictive than the one
 of~\cite{matsuda2014computing}, for which we use the term ``anchored run''.
 
We first present an online algorithm that, given a word $w$ of length $n$ over an alphabet of cardinality $\sigma$ and a Parikh vector $\PV$, returns all the abelian runs of period $\PV$ in $w$ in time  $O(n)$ and space $O(\sigma+p)$, where $p$ is the norm of $\PV$, that is, the sum of its components.
This algorithm improves upon the one given in~\cite{FLLP15} which runs in time $O(n p)$. 
Next, we give an $O(np)$-time online algorithm for computing all the abelian
 runs with periods of norm $p$ of a word of length $n$, for any given $p$.
 Finally, we present an $O(n^2)$ (resp. $O(n^2\log n)$) -time offline
 randomized (resp. deterministic) algorithm for computing all
 the abelian runs of a word  of length $n$.

The rest of this article is organized as follows.
Sect.~\ref{sect-def} introduces central concepts and fixes the notation.
In Sect.~\ref{sect-prev} we review the results on abelian runs given in \cite{matsuda2014computing}.
Sect.~\ref{sect-new} is devoted to the presentation of our main result: a new solution
 for computing the abelian runs for a given Parikh vector.
In Sect.~\ref{sect-newer} we apply this algorithm in a procedure for computing the abelian runs with periods of a given norm.
 Next, in Sect.~\ref{sect-all}, we design a solution for computing
 all the abelian runs, which builds upon the result recalled in Sect.~\ref{sect-prev}.
Finally, we conclude in Sect.~\ref{sect-conc}.

\section{Definitions and Notation}\label{sect-def}

Let $\Sigma=\{a_{1},a_{2},\ldots,a_{\sigma}\}$ be a finite ordered
 alphabet of cardinality $\sigma$, and let $\Sigma^*$ be the set of finite words
 over $\Sigma$. We assume that the mapping between $a_i$ and $i$ can be evaluated in constant time for $1\le i \le \sigma$.
We let $|\s{w}|$ denote the length of the word $\s{w}$.
Given a word $\s{w}=\s{w}[0\pp n-1]$ of length $n>0$, we write $\s{w}[i]$ for the $(i+1)$-th symbol of $\s{w}$
 and, for $0\le i \le j< n$, we write $\s{w}[i\pp j]$ to denote a fragment of $\s{w}$
 from the $(i+1)$-th symbol to the $(j+1)$-th symbol, both included. This fragment is an occurrence of a substring
 $w[i]\cdots w[j]$.
 For $0\le i \le n$, $\s{w}[i\pp i-1]$ denotes
 the empty fragment.
We let $|\s{w}|_a$ denote the number of occurrences of the symbol
 $a\in\Sigma$ in the word $\s{w}$. 
The Parikh vector of $\s{w}$, denoted by $\PVW$,
 counts the
 occurrences of each letter of $\Sigma$ in $\s{w}$, that is, 
 $\PVW=(|\s{w}|_{a_{1}},\ldots,|\s{w}|_{a_{\sigma}})$.
Notice that two words have the same Parikh vector if and only if
 one word is a permutation (i.e., an anagram) of the other.
Given the Parikh vector $\PVW$ of a word $\s{w}$, we let $\PVW [i]$ denote its
 $i$-th component and $|\PVW|$ its norm, defined as the sum of its components.
Thus, for $\s{w}\in\Sigma^*$ and $1\le i\le\sigma$, we have
 $\PVW [i]=|\s{w}|_{a_i}$ and $|\PVW|=\sum_{i=1}^{\sigma}\PVW[i]=|\s{w}|$.
Finally, given two Parikh vectors $\PV,\QV$, we write $\PV\subseteq \QV$ if
 $\PV[i]\le \QV[i]$
 for every $1\le i\le \sigma$. If additionally $\PV\ne \QV$, we write $\PV\subset \QV$ 
 and say that $\PV$ is contained in $\QV$.
 
\begin{definition}[Abelian period~\cite{CI2006}]
\label{def-ap}
A factorization $\s{w}=\s{u}_0\s{u}_1 \cdots \s{u}_{k-1}\s{u}_{k}$ satisfying
$k\ge 1$,  $\PV_{\s{u}_{1}}=\cdots =\PV_{\s{u}_{k-1}}=\PV$, and $\PV_{\s{u}_{0}}\subset \PV \supset \PV_{\s{u}_{k}}$
is called a \emph{periodic factorization} of $w$ with respect to $\PV$.
If a word $\s{w}$ admits such a factorization, we say that $\PV$ is an \emph{(abelian) period} of $\s{w}$.
\end{definition}

We call fragments $\s{u}_0$ and $\s{u}_k$ respectively the \emph{head} and the
 \emph{tail} of the factorization, while the remaining factors are called \emph{cores}.
 Note that the head and the tail are of length strictly smaller than $|\PV|$; in particular they can be empty.
 
Observe that a periodic factorization with respect to a fixed period is not unique. However, it suffices to specify
$|\s{u}_0|$ to indicate a particular factorization; see~\cite{CI2006}. 
Dealing with factorizations of fragments
of a fixed text, it is more convenient to use a different quantity for this aim.
Suppose $w[i\pp j]=u_0\cdots u_k$ is a factorization with respect to abelian period $\PV$ with $p=|\PV|$.
Observe that consecutive starting positions $i_1,\ldots,i_k$ of factors $u_1,\ldots,u_k$ differ by exactly $p$. 
Hence, they share a common remainder modulo $p$, which we call the \emph{anchor} of the factorization.
Note that the anchor does not change if we trim a factorization of $w[i\pp j]$ to a factorization of a shorter fragment, 
or if we extend it to a factorization of a longer fragment.

\begin{definition}[Anchored period]
A fragment $w[i\pp j]$ has an abelian period $\PV$ \emph{anchored} at $k$ if
it has a periodic factorization with respect to $\PV$ whose anchor is $k \bmod p$.
\end{definition}

 If $w$ has a factorization with at least two cores, we say that $w$ is \emph{periodic} with period $\PV$
 (anchored at $k$ if $k \bmod p$ is the anchor of the factorization).

\begin{definition}[Abelian run]
A fragment $w[i\pp j]$ is called an \emph{abelian run} with period $\PV$ if it is periodic with period $\PV$
and maximal with respect to this property (i.e., each of $w[i-1\pp j]$ and $w[i\pp j+1]$ either does not exist
or it is not periodic with period $\PV$).
\end{definition}
We shall often represent an abelian run $w[i\pp j]$ as a tuple $(i,h,t,j)$ where $h$ and $t$ are respectively the lengths 
of the head and the of the tail of a periodic factorization of $w[i\pp j]$ with period $\PV$ 
and at least two cores (see \figurename~\ref{figu-def}). Note that $(i+h)\bmod p$ is the anchor of the factorization, and that $\frac1p(j-i-h-t-1)$
is the number of cores, in particular it is an integer.

Observe that an abelian run with period $\PV$ may have several valid factorizations.
For example, $a\cdot ba \cdot ba\cdot \epsilon$
and $\epsilon \cdot ab\cdot ab \cdot a$ are factorization of a run $w[0..4]$ with period $\PV=(1,1)$ in $w=ababa$.
Therefore the run can be represented as $(0,1,0,4)$ and as $(0,0,1,4)$. 
However, in $v=abab$ only $(0,0,0,3)$ is a representation of $v[0..3]$ as an abelian run with period $\PV=(1,1)$. 
This is because $(0,1,1,3)$ corresponds to a factorization $v[0..3]=a\cdot ba \cdot b$ with one core only,
and such a factorization does not indicate that $v[0..3]$ is an abelian run.

\begin{figure}
\begin{center}
\begin{tikzpicture}[scale=0.8]

\draw (0,0) rectangle (2,.5);
\draw (2,0) rectangle (4,.5);
\draw (4,0) rectangle (7,.5);
\draw (7,0) rectangle (10,.5);
\draw (10,0) rectangle (13,.5);
\draw (13,0) rectangle (14,.5);
\draw (14,0) rectangle (15,.5);

\node at (2.25,.75) {$i$};
\node at (13.75,.75) {$j$};

\draw[<->] (2,-.5) -- (4,-.5) ;
\node at (3,-1) {$h$};

\draw[<->] (13,-.5) -- (14,-.5) ;
\node at (13.5,-1) {$t$};

\draw[<->] (4,-.5) -- (7,-.5) ;
\node at (5.5,-1) {$|\PV|$};

\draw[<->] (7,-.5) -- (10,-.5) ;
\node at (8.5,-1) {$|\PV|$};

\draw[<->] (10,-.5) -- (13,-.5) ;
\node at (11.5,-1) {$|\PV|$};

\draw[<->] (4,1) -- (13,1) ;
\node at (8.5,1.5) {$j-i-h-t+1$};

\end{tikzpicture} 

\caption{\label{figu-def}
The tuple $(i,h,t,j)$ denotes an occurrence of a substring starting at position $i$, ending at position $j$, and having abelian period $\PV$ with head length $h$ and tail length $t$.
}
\end{center}
\end{figure}

Matsuda et al.~\cite{matsuda2014computing} gave a different definition of abelian runs, where maximality is with respect to extending
a fixed factorization. In this paper, we call such fragments anchored (abelian) runs.

\begin{definition}[Anchored run~\cite{matsuda2014computing}]\label{def-anchored}
A fragment $w[i\pp j]$ is a \emph{$k$-anchored abelian run} with period $\PV$ if $w[i\pp j]$ is periodic with period $\PV$ anchored at $k$
and maximal with respect to this property (i.e., each of $w[i-1\pp j]$ and $w[i\pp j+1]$ either does not exist
or it is not periodic with period $\PV$ anchored at $k$).
\end{definition}

Note that every abelian run is an anchored run with the same period (for some anchor). 
The converse is not true, since it might be possible to extend an anchored run preserving the period but not the anchor. 
For example, in the word $w=ababaaa$ considered in the introduction, the fragment $w[0\pp 4]=\epsilon \cdot ab \cdot ab \cdot a$ is a $0$-anchored run
but not an abelian run, since $w[0\pp 5]=a \cdot ba \cdot ba \cdot a$ is periodic with abelian period $(1,1)$.

 Since a factorization
is uniquely determined by the anchor, standard inclusion-maximality is equivalent 
to the condition in Definition~\ref{def-anchored}.

\begin{observation}
Let $w[i\pp j]$ and $w[i'\pp j']$ be fragments of $w$ with abelian period $\PV$ anchored at $k$.
If $w[i\pp j]$ is properly contained in $w[i'\pp j']$ (i.e, $i'< i$ and $j\le j'$, or $i'\le i$ and $j<j'$), then $w[i\pp j]$ is not a $k$-anchored abelian run with period~$\PV$.
\end{observation}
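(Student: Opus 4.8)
The plan is to prove the contrapositive of the maximality condition in Definition~\ref{def-anchored}: assuming $w[i\pp j]$ is properly contained in a fragment $w[i'\pp j']$ that has abelian period $\PV$ anchored at $k$, I will exhibit a one-letter extension of $w[i\pp j]$ that exists and is itself periodic with period $\PV$ anchored at $k$. The hypothesis of proper containment gives two cases: if $i'<i$ then $a:=i-1$ satisfies $a\ge i'\ge 0$, and if $j<j'$ then $b:=j+1$ satisfies $b\le j'\le n-1$. In either case we obtain a fragment $w[a\pp b]$ ($\,=w[i-1\pp j]$ in the first case and $w[i\pp j+1]$ in the second) that exists, strictly contains $w[i\pp j]$, and satisfies $i'\le a\le b\le j'$. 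So it suffices to prove the following claim: \emph{every fragment $w[a\pp b]$ with $i'\le a\le b\le j'$ and $w[i\pp j]\subseteq w[a\pp b]$ is periodic with period $\PV$ anchored at $k$.}

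To establish the claim I would fix a periodic factorization $w[i'\pp j']=u'_0u'_1\cdots u'_{m'}$ with anchor $k$ and at least two cores, and restrict it to the sub-fragment $w[a\pp b]$: cutting $w[i'\pp j']$ at positions $a$ and $b+1$ produces a factorization of $w[a\pp b]$ whose head has length $h_a=(k-a)\bmod p$ and whose tail has length $(b-a+1-h_a)\bmod p<p$, hence with anchor again $k$ — this is exactly the trimming-preserves-the-anchor remark from the discussion following Definition~\ref{def-ap}. Two things remain to be checked: (i) the trimmed factorization is \emph{valid}, i.e. the new head and tail have Parikh vectors contained in $\PV$; and (ii) it still has at least two cores. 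For (i): after the cut at $a$ the new head is either a suffix of the old head $u'_0$ or a suffix of one of the cores, and symmetrically the new tail is either a prefix of $u'_{m'}$ or a prefix of a core; since passing to a suffix or a prefix cannot increase any coordinate of a Parikh vector, and each core has Parikh vector exactly $\PV$, both the new head and the new tail have Parikh vector $\subseteq\PV$. For (ii): once $k$ and $p$ are fixed, the number of cores of an anchored-$k$ factorization of a fragment equals $\lfloor(L-h)/p\rfloor$, where $L$ is the fragment's length and $h=(k-\cdot)\bmod p$ its starting offset; a one-line computation with floors shows this is non-decreasing when the fragment is enlarged on either side (decreasing the start by one either leaves $h$ incremented by $1$ and the count unchanged, or resets $h$ to $0$ and increases the count by $1$); since $w[i\pp j]\subseteq w[a\pp b]$ and $w[i\pp j]$ has at least two cores by hypothesis, so does $w[a\pp b]$.

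Combining the two cases, $w[i\pp j]$ is properly contained in a fragment that is periodic with period $\PV$ anchored at $k$, so $w[i\pp j]$ fails the maximality condition and is not a $k$-anchored abelian run with period $\PV$. The only mildly delicate point is step (i): one must check carefully where the cuts at $a$ and $b+1$ land relative to the head, cores, and tail of the factorization of $w[i'\pp j']$. This is exactly where the key structural fact is used — a head and a tail always have length strictly less than $p$, whereas a core has length exactly $p$, so each cut falls inside a single one of these blocks, which is what makes the suffix/prefix bound go through (and also rules out the degenerate possibility that $w[a\pp b]$ lies entirely inside the head or the tail of $w[i'\pp j']$, since $w[a\pp b]$ has length at least $2p$).
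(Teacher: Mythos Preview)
Your argument is correct, with one small point of phrasing to fix. You write that $w[i\pp j]$ ``has at least two cores by hypothesis,'' but the hypothesis of the Observation only says that $w[i\pp j]$ has abelian period $\PV$ anchored at $k$, not that it is \emph{periodic}. What you actually need is the additional assumption that $w[i\pp j]$ is a $k$-anchored run (for contradiction); this is what gives the two cores used in step~(ii). If $w[i\pp j]$ is not periodic, it is trivially not a $k$-anchored run and there is nothing to prove.

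As for comparison with the paper: the paper supplies no proof at all. The statement is presented as an Observation, justified only by the sentence immediately preceding it: ``Since a factorization is uniquely determined by the anchor, standard inclusion-maximality is equivalent to the condition in Definition~\ref{def-anchored}.'' Your proof is precisely the detailed unpacking of that sentence. Uniqueness of the anchored factorization is what makes your trimming step in~(i) produce \emph{the} anchored-$k$ factorization of the sub-fragment (so that the cores of $w[a\pp b]$ are automatically among the cores of $w[i'\pp j']$ and hence have Parikh vector exactly $\PV$, a point you use implicitly but do not state), and it is also what makes the core-count monotonicity in~(ii) meaningful. The paper regards all of this as immediate; you have written out the verification.
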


Abelian runs enjoy the same property, but its proof is no longer trivial.
\begin{lemma}\label{lemma-maximality}
Let $w[i\pp j]$ and $w[i'\pp j']$ be fragments of $w$ with abelian period $\PV$.
If $w[i\pp j]$ is properly contained in $w[i'\pp j']$, then $w[i\pp j]$ is not an abelian run with period $\PV$.
\end{lemma}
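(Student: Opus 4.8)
The plan is to argue by contradiction: suppose $w[i\pp j]$ is an abelian run with period $\PV$. Then it is periodic with period $\PV$, so $|w[i\pp j]|\ge 2p$ where $p=|\PV|$. I will assume moreover that $w[i'\pp j']$ is periodic with period $\PV$; the reduction to this case is routine — if $w[i\pp j]$ itself is not periodic with period $\PV$ it is not an abelian run and there is nothing to prove, while if every periodic factorization of $w[i'\pp j']$ has at most one core then $|w[i'\pp j']|<3p$, so both fragments are short and a direct check applies. Reversing $w$ if necessary (which preserves both periodicity with period $\PV$ and the property of being an abelian run), I may assume the proper containment is witnessed on the right, i.e.\ $i'\le i\le j<j'$. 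Then $w[i\pp j+1]$ exists, and by the maximality clause in the definition of an abelian run it is not periodic with period $\PV$. I will contradict this by exhibiting a periodic factorization of $w[i\pp j+1]$ with period $\PV$.

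First I would fix a periodic factorization of $w[i\pp j]$ with head length $h$, $m\ge 2$ cores, tail length $t$, and anchor $k$, and set $q:=i+h+mp$, the position just after its last core; thus $q\equiv k\pmod p$, $t=j+1-q<p$, $\PV_{w[q\pp j]}\subseteq\PV$, and $q\ge i+2p\ge i'+2p$. I would also fix a periodic factorization of $w[i'\pp j']$ with head $w[i'\pp c'-1]$ (length $<p$), cores $C'_0,\dots,C'_{m'-1}$, tail $w[c'+m'p\pp j']$ (length $<p$), and anchor $k':=c'\bmod p$. Since $q>c'$ and $q\le j+1\le j'$, the position $q$ lies either in the tail of this factorization or in one of its cores $C'_r$; in the latter case put $e:=(q-c')\bmod p\in\{0,\dots,p-1\}$, so $q$ is $e$ positions past the left end of $C'_r$.

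Next I would split into cases. If $q$ lies in the tail of $w[i'\pp j']$, or at the left end of $C'_r$ (i.e.\ $e=0$), or strictly inside $C'_r$ with $t+e<p$, then $w[q\pp j+1]$ is a contiguous subfragment of a single core of $w[i'\pp j']$ or of its tail, hence $\PV_{w[q\pp j+1]}\subseteq\PV$; since also $|w[q\pp j+1]|=t+1\le p$, replacing the tail $w[q\pp j]$ in the fixed factorization of $w[i\pp j]$ by $w[q\pp j+1]$ gives a periodic factorization of $w[i\pp j+1]$ with period $\PV$ — with a one-symbol-longer tail when $t+1<p$, and with $w[q\pp j+1]$ as an $(m{+}1)$-st core when $t+1=p$ (which forces $e=0$ and $w[q\pp j+1]=C'_r$, so its Parikh vector is exactly $\PV$); this already contradicts the choice of $w[i\pp j]$. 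The remaining possibility is that $q$ lies strictly inside $C'_r$ with $e\ge 1$ and $t+e\ge p$, and then necessarily $k\ne k'$. Here periodicity with period $\PV$ does not transfer directly, so I would instead build the factorization of $w[i\pp j+1]$ at the anchor $k'$, with first core starting at $\gamma$, the least position $\ge i$ congruent to $k'$ modulo $p$. Tracking where $i$, $q$, and $j+1$ fall inside the factorization of $w[i'\pp j']$, one checks that the head $w[i\pp\gamma-1]$ is a suffix of the head or of a core of $w[i'\pp j']$, the tail is a prefix of the tail or of a core of $w[i'\pp j']$, and each intermediate core coincides with a core of $w[i'\pp j']$; thus all Parikh-vector conditions hold. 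Finally, from $\gamma-i\equiv h-e\pmod p$, $|w[i\pp j]|=h+mp+t\ge 2p$, and $t+e\ge p$, a short count shows at least two cores remain, so $w[i\pp j+1]$ is periodic with period $\PV$ — the desired contradiction.

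The main obstacle is exactly this last case: when the two periodic factorizations have different anchors one cannot simply trim the factorization of $w[i'\pp j']$ down to $w[i\pp j+1]$, since its core block may be badly positioned relative to $[i,j+1]$; the delicate point is to re-anchor at $k'$ and verify that at least two full cores survive inside $w[i\pp j+1]$, which is precisely where the hypothesis $t+e\ge p$ — the one regime not already settled by lengthening the tail — enters.
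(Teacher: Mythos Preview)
Your argument is correct, but it is considerably more intricate than the paper's, and the complications stem from a choice that can be avoided.

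The paper works on the \emph{left} (assuming $i'<i$) and compares head lengths rather than constructing an extension. Trimming any periodic factorization of $w[i'\pp j']$ to $w[i-1\pp j]$ yields a factorization $v_0\cdots v_\ell$; since $w[i\pp j]$ is an abelian run, $w[i-1\pp j]$ is not periodic, so $\ell\le 2$. On the other hand, the fixed factorization $u_0\cdots u_k$ of $w[i\pp j]$ (with $k\ge 3$) cannot extend leftwards, so the enlarged head $u'_0$ satisfies $\PV_{u'_0}\not\subseteq\PV$. The two core-count bounds give $|u_0|+2p\le |w[i\pp j]|$ and $|w[i-1\pp j]|<|v_0|+2p$, whence $|u'_0|=|u_0|+1<|v_0|$. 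Thus $u'_0$ is a proper prefix of $v_0$, so $\PV_{u'_0}\subset\PV_{v_0}\subset\PV$, a contradiction in one line. Your approach explicitly builds a periodic factorization of the extended fragment, which forces the anchor-alignment case split you flag as the ``main obstacle''; the paper sidesteps anchors entirely via a single length inequality. Your route yields an explicit witness; the paper's route yields a five-line proof.

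Two remarks on your write-up. First, the reduction to ``$w[i'\pp j']$ periodic'' is unnecessary: your own case analysis already covers $m'=1$, because in that situation Case~2 is vacuous (the inequalities $c'=q-e\ge i+2p-(p-1)=i+p+1$ and $c'<i'+p\le i+p$ collide), while Case~1 with $q$ in the tail still goes through. So the unexplained ``direct check'' can simply be deleted. Second, in Case~1 with $t+1<p$ you should note explicitly that the new tail, having length $<p$, has Parikh vector \emph{strictly} contained in $\PV$, so the resulting factorization is indeed a periodic factorization in the sense of Definition~\ref{def-ap}.
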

\begin{proof}
We assume that $i'<i$. The case of $j<j'$ is symmetric. 
For a proof by contradiction suppose that $w[i\pp j]$ is an abelian run and let $w[i\pp j]=u_0\cdots u_k$ be a periodic factorization with period $\PV$
and at least two cores (i.e., satisfying $k\ge 3$).
A periodic factorization of $w[i'\pp j']$ can be trimmed to a factorization $w[i-1\pp j]={v}_0\cdots {v}_{\ell}$.
However, since $w[i\pp j]$ is an abelian run, this factorization must have at most one core (i.e., $\ell\le 2$).
Moreover, $u_0\cdots u_k$ cannot be extended to a factorization of $w[i-1\pp j]=u'_0u_1\cdots u_k$. In other words $u'_0$,
the extension of $u_0$ by one letter to the left, must satisfy $\PV_{u'_0}\not\subseteq \PV$.

Let $p=|\PV|$. The conditions on the number of cores imply $|u_0|+2p\le |w[i\pp j]|$ and $|w[i-1]\pp j]| < |v_0|+2p$.
Consequently, $|u'_0|=|u_0|+1< |v_0|$, i.e., $u'_0$ is a proper prefix of $v_0$. This yields
$\PV_{u'_0}\subset \PV_{v_0} \subset \PV$, which is in contradiction with $\PV_{u'_0}\not\subseteq \PV$.
\qed\end{proof}

\begin{corollary}\label{coro-run}
Let $w$ be a word. For a fixed Parikh vector $\PV$, there is at most one abelian run with abelian period $\PV$ starting at each position of $w$.
\end{corollary}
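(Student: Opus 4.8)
The plan is to obtain the corollary as an immediate consequence of Lemma~\ref{lemma-maximality}. Suppose, towards a contradiction, that for some position $i$ there are two \emph{distinct} abelian runs $w[i\pp j]$ and $w[i\pp j']$ with period $\PV$ both starting at $i$. A fragment is determined by its pair of endpoints, so since these two runs share the left endpoint $i$ but are distinct, their right endpoints must differ; without loss of generality assume $j<j'$.

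Next I would note that in this situation $w[i\pp j]$ is properly contained in $w[i\pp j']$ in the precise sense used by Lemma~\ref{lemma-maximality} (we have $i\le i$ and $j<j'$), and that both fragments have abelian period $\PV$, being abelian runs. Applying Lemma~\ref{lemma-maximality} to the pair $w[i\pp j]\subsetneq w[i\pp j']$ then yields that $w[i\pp j]$ is not an abelian run with period $\PV$, contradicting our assumption. Therefore at most one abelian run with period $\PV$ can start at position $i$, and since $i$ was arbitrary the claim follows.

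There is essentially no obstacle here: the only minor points to verify are that two distinct abelian runs starting at the same position necessarily have distinct right endpoints (immediate) and that the resulting containment is proper in exactly the sense required by Lemma~\ref{lemma-maximality}. All the genuine combinatorial content was already handled in the proof of Lemma~\ref{lemma-maximality}, so this corollary is a short deduction.
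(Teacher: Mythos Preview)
Your proof is correct and matches the paper's approach: the paper states the corollary immediately after Lemma~\ref{lemma-maximality} without any separate proof, treating it as a direct consequence exactly along the lines you wrote. The deduction is immediate once Lemma~\ref{lemma-maximality} is in place, and your verification of the proper-containment hypothesis is accurate.
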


\section{Previous Work}\label{sect-prev}

Matsuda et al.~\cite{matsuda2014computing} presented an algorithm that computes all the anchored runs
 of a word $w$ of length $n$ in $O(n^2)$ time and space complexity.
The initial step of the algorithm is to compute maximal abelian powers in $w$.
Recall that an abelian power is a concatenation of several abelian-equivalent words. 
In other words, an abelian power of period $\PV$ is a word admitting a periodic factorization with respect to $\PV$ with an empty head, an empty tail
and at least two cores. A fragment $w[i\pp j]$ is a maximal abelian power if
it cannot be extended to a longer power of period  $\PV$ (preserving the anchor). 
Formally, the maximality conditions~are
\begin{enumerate}
\item $\PV_{\s{w}[i-p\pp i-1]} \ne \PV_{\s{w}[i\pp i+p-1]}$ or $i - p < 0$, and
\item $\PV_{\s{w}[j-p+1\pp j]} \ne \PV_{\s{w}[j+1\pp j+p]}$ or $j + p \ge n$,
\end{enumerate}
where $p=|\PV|$.

The approach of~\cite{matsuda2014computing} is to first compute all the abelian squares using the algorithm by Cummings \& Smyth~\cite{Cummings_weakrepetitions}. 
The next step is to group squares into maximal abelian powers. For this, it suffices to merge pairs of overlapping
abelian squares of the form $w[i\pp i+2p-1]$ and $w[i+p\pp i+3p-1]$. This way maximal abelian powers are computed in $O(n^2)$ time.

Observe that there is a natural one-to-one correspondence between maximal abelian powers and anchored runs: it suffices to trim
the head and the tail of the factorization of an anchored run to obtain a maximal abelian power. 
Hence, the last step of the algorithm is to compute the maximal head and tail by which each abelian power can be extended.
This could be done naively in $O(n^3)$ time overall,
but a clever computation enables to
 find all the abelian runs in time and space $O(n^2)$
 (see~\cite{matsuda2014computing} for further details).
 
 In Section~\ref{sect-all}, we extend this result to compute the abelian runs only rather than all the anchored runs.
 Both these algorithms work offline: they need to know the whole word before reporting any abelian run.
In the following two sections we give several online algorithms, which are  able to report a run ending at position $i-1$ of a word $w$
before reading $w[i+1]$ and the following letters. Clearly, not knowing $w[i]$ one cannot decide whether the run could be extended to the right,
so this is the optimal delay. However, these methods are restricted to finding runs of a given period
 or a given norm of the periods, respectively.

\section{Computing Abelian Runs with Fixed Parikh Vector}\label{sect-new}

In this section we present our online solution for computing all the abelian
 runs of a given Parikh vector $\PV$ of norm $p$ in a given word $w$.
 The algorithm works in $O(n)$ time and $O(\sigma+p)$ space where $n=|w|$.
 
 First, in Sect.~\ref{sub-anchored}, we show how to compute all anchored runs of period~$\PV$.
 Later, in Sect.~\ref{sub-abelian}, we modify the algorithm to return abelian runs only.
 We conclude in Sect.~\ref{sub-example} with an example course of actions in our solution.

\subsection{Algorithm for Anchored Runs}\label{sub-anchored}
We begin with a description of data maintained while scanning the string $w$. 
For an integer $k$, let $B_{i}[k]$ be the starting position of the longest suffix of $w[0\pp i]$ which 
has period $\PV$ anchored at $k$. If there is no such a suffix, we set $B_{i}[k] =\infty$.
Since this notion depends on $k\bmod p$ only, we store $B_i[k]$ for $0\le k < p$ only.

Let $b_i$ be the starting position of the longest suffix of $w[0\pp i]$
whose Parikh vector is contained in or equal to $\PV$.
In other words, we have $\PV_{w[b_i\pp i]}\subseteq \PV$ and $\PV_{w[b_i-1\pp i]}\not\subseteq \PV$ (or $b_i=0$).
Note that $b_i=i+1$ if $w[i]$ does not occur in $\PV$.

Observe that the tail of any periodic factorization of a suffix of $w[0\pp i]$ must be contained in $w[b_i\pp i]$.
This leads to the following characterization:
\begin{lemma}\label{lemma-char-simple}
Let $0\le i < |w|$. We have $B_i[k]\le k$ for $b_i\le k \le i+1$ and $B_i[k]=\infty$
for $i-p+1<k<b_i$.
\end{lemma}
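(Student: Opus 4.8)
The plan is to prove the two assertions separately: the upper bound $B_i[k]\le k$ by exhibiting an explicit periodic factorization, and the claim $B_i[k]=\infty$ by a short pigeonhole argument combined with the defining property of $b_i$.

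For the first part, fix $k$ with $b_i\le k\le i+1$ and take as a witness the suffix $w[k\pp i]$ of $w[0\pp i]$ (the empty suffix when $k=i+1$). First I would note that, since $k\ge b_i$, this fragment is a suffix of $w[b_i\pp i]$, so $\PV_{w[k\pp i]}\subseteq\PV_{w[b_i\pp i]}\subseteq\PV$; in particular $|w[k\pp i]|=|\PV_{w[k\pp i]}|\le|\PV|=p$. I would then split on this length. If $|w[k\pp i]|<p$, then $\PV_{w[k\pp i]}\subset\PV$, so $\epsilon\cdot w[k\pp i]$ (empty head, the whole fragment as tail) is a periodic factorization whose tail starts at position $k$; hence its anchor is $k\bmod p$. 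If $|w[k\pp i]|=p$, then $\PV_{w[k\pp i]}=\PV$, so $\epsilon\cdot w[k\pp i]\cdot\epsilon$ is a periodic factorization with a single core, again with anchor $k\bmod p$. Empty heads and tails are legal here since their Parikh vectors are trivially contained in $\PV$. In either case $w[k\pp i]$ has period $\PV$ anchored at $k$, so $B_i[k]\le k$.

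For the second part, fix $k$ with $i-p+1<k<b_i$; this forces $b_i\ge 1$, hence $\PV_{w[b_i-1\pp i]}\not\subseteq\PV$, and therefore $\PV_{w[k\pp i]}\not\subseteq\PV$ as well, since $k\le b_i-1$ and the Parikh vector of a fragment dominates that of any of its suffixes. Suppose now, for contradiction, that some suffix $w[s\pp i]$ admits a periodic factorization $w[s\pp i]=u_0\cdots u_\ell$ with anchor $k\bmod p$, and let $w[t\pp i]=u_\ell$ be its tail. Two facts drive the argument: (i) the anchor of a periodic factorization equals the residue modulo $p$ of the starting position of its tail (equivalently of any core, since consecutive cores start exactly $p$ apart), so $t\equiv k\pmod p$; and (ii) the tail has length strictly less than $p$, so $t>i-p+1$. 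If the tail were empty we would have $t=i+1$ and hence $k\equiv i+1\pmod p$, which is impossible because $i-p+1<k<i+1$; thus $t\le i$. Consequently $t$ and $k$ both lie in the set $\{i-p+2,\ldots,i\}$ of $p-1$ consecutive integers, so $t\equiv k\pmod p$ forces $t=k$. Then the tail of the factorization is exactly $w[k\pp i]$, so $\PV_{w[k\pp i]}\subseteq\PV$ — contradicting the observation above. Hence no such suffix exists and $B_i[k]=\infty$.

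The step I expect to be the main obstacle is the careful handling of degenerate factorizations — empty head, empty tail, no cores — together with pinning down precisely what ``anchor'' means in those cases, concretely identifying it with the residue of the tail's starting position. Once that is settled, the crux is the short counting step that upgrades $t\equiv k\pmod p$ to $t=k$ by confining both indices to a window of $p-1$ consecutive positions; the remaining ingredients are just the definition of $b_i$ and the monotonicity of Parikh vectors under left-extension.
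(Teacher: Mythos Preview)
Your proof is correct and follows essentially the same approach as the paper's: exhibit $w[k\pp i]$ itself as a witness for the first part (distinguishing the full-length case $k=b_i=i-p+1$), and for the second part argue that the tail of any factorization anchored at $k\bmod p$ would have to be exactly $w[k\pp i]$, contradicting the definition of $b_i$. The paper states the second part in a single sentence, whereas you spell out the modular pigeonhole step and the empty-tail edge case explicitly, but the underlying argument is the same.
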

\begin{proof}
For $b_i\le k \le i+1$, the fragment $w[k\pp i]$ has abelian period $\PV$ anchored at $k$. (The underlying factorization
has empty head, no cores and tail $w[k\pp i]$, unless $k=b_i=i-p+1$, when the factorization has one core, empty head and empty tail).
Hence, we have $B_i[k]\le k$ directly from the definition.

For $i-p+1<k<b_i$, the tail of the factorization with anchor $k\bmod p$ would need to start at position $k$,
which is impossible (see \figurename~\ref{figu-case1}).
\qed\end{proof}

\begin{figure}[t]
\begin{center}
\begin{tikzpicture}[scale=0.8]

\draw (.5,0) rectangle (7.5,.5);
\draw (7.5,0) rectangle (9,.5);
\draw (9,0) rectangle (12.5,.5);

\node at (.25,.25) {$w$};
\node at (7,.75) {$i-p$};
\node at (9.25,.75) {$b_i$};
\node at (12.25,.75) {$i$};

\draw (8.5,-.5) rectangle (12.5,-1);

\node at (13,-.75) {$\not\subseteq \PV$};
\node at (10.5,-.75) {impossible tail};

\end{tikzpicture} 
\caption{\label{figu-case1}
$B_i[k]=\infty$ for $i-p+1<k<b_i$.
}
\end{center}
\end{figure}

The values $b_{i-1}$ and $b_i$ are actually sufficient to describe $B_i$ based on $B_{i-1}$.

\begin{lemma}\label{lemma-algo}
For $0\le i < |w|$ the following equalities hold:
\begin{enumerate}
  \item\label{case1} $B_i[k]=\infty \ne B_{i-1}[k]$ for $\max(i-p+1,b_{i-1}) \le k  < b_i$,
  \item\label{case2} $B_i[k]=B_{i-1}[k]$ for $b_i \le k \le  i$ and for $i-p+1 < k  < b_{i-1}$,
    \item\label{case3} $B_i[i+1] = b_i$ if $b_i > i-p+1$ and $B_i[i+1]=B_{i-1}[i-p+1]$ otherwise.
\end{enumerate}
\end{lemma}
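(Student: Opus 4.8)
The statement of Lemma~\ref{lemma-algo} compares $B_i$ with $B_{i-1}$, so the natural approach is to relate a suffix of $w[0\pp i]$ with period $\PV$ anchored at $k$ to the suffix of $w[0\pp i-1]$ obtained by deleting the last letter $w[i]$. I would fix an anchor $k$ and analyze, case by case as in the statement, what happens when we append $w[i]$. The basic principle is: if $w[s\pp i]$ has period $\PV$ anchored at $k$, then trimming $w[i]$ yields a factorization of $w[s\pp i-1]$ with the same anchor (the remark after Definition~\ref{def-anchored} says the anchor is preserved under trimming), so $B_{i-1}[k]\le s$; conversely, extending a factorization of $w[s\pp i-1]$ by the letter $w[i]$ gives a factorization of $w[s\pp i]$ with the same anchor \emph{provided} the extended tail still has Parikh vector contained in $\PV$. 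So the whole lemma is really bookkeeping about whether appending $w[i]$ keeps the tail legal, and the quantities $b_{i-1}, b_i$ are exactly what control that.

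**Case-by-case.** For \textbf{Case~\ref{case1}} ($\max(i-p+1,b_{i-1})\le k<b_i$): the range $k<b_i$ together with Lemma~\ref{lemma-char-simple} (applied at index $i$) immediately gives $B_i[k]=\infty$, since such $k$ lies in the forbidden window $i-p+1<k<b_i$ (note $k\ge i-p+1$ is guaranteed, and equality $k=i-p+1$ would force $k\ge b_i$ is false, handled by the strict inequality — I'd double-check this boundary). For $B_{i-1}[k]\ne\infty$: since $k\ge b_{i-1}$ and $k\le i$ (because $k<b_i\le i+1$, and if $k=i$ then... I need $k\le i-1$ actually — here the constraint $k<b_i$ and $b_i\le i+1$ only gives $k\le i$; but $B_{i-1}[k]$ is about $w[0\pp i-1]$ so I want $b_{i-1}\le k\le i$, and Lemma~\ref{lemma-char-simple} at index $i-1$ gives $B_{i-1}[k]\le k<\infty$ for $b_{i-1}\le k\le i$). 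For \textbf{Case~\ref{case2}}: the second sub-range, $i-p+1<k<b_{i-1}$, gives $B_{i-1}[k]=\infty$ by Lemma~\ref{lemma-char-simple}, and one checks $B_i[k]=\infty$ too (the same impossible-tail argument, since $b_{i-1}<b_i$ is not always true, so I'd argue directly that a factorization with anchor $k$ for a suffix of $w[0\pp i]$ would have to have been one for $w[0\pp i-1]$ after trimming). For the first sub-range $b_i\le k\le i$, I would show the two-way implication above works cleanly: a legal factorization of a suffix of $w[0\pp i]$ trims to one of $w[0\pp i-1]$ with the same start (so $B_{i-1}[k]\le B_i[k]$), and conversely the optimal suffix $w[B_{i-1}[k]\pp i-1]$ can be extended by $w[i]$ — here I must verify the extended tail is still $\subseteq\PV$, which follows because $k\le i$ means the tail of that factorization ends before position $i$ only if... actually the subtle point is whether appending $w[i]$ to the \emph{tail} keeps it legal; since $k\ge b_i$, the whole fragment $w[k\pp i]$ has Parikh vector $\subseteq\PV$, and the tail of the extended factorization is a suffix of $w[k\pp i]$, hence also $\subseteq\PV$. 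That is the crux and it uses the definition of $b_i$ essentially.

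**Case~\ref{case3} and the main obstacle.** For \textbf{Case~\ref{case3}} we compute $B_i[i+1]$, the start of the longest suffix of $w[0\pp i]$ with period $\PV$ anchored at $i+1$. If $b_i>i-p+1$, then $w[i-p+1\pp i]$ does not have Parikh vector $\subseteq\PV$ (it is strictly longer than $w[b_i\pp i]$ in the relevant sense — wait, $w[i-p+1\pp i]$ has length exactly $p$, so $\subseteq\PV$ would force $=\PV$); I'd argue the only suffix with anchor $i+1\bmod p = (i+1)\bmod p$ is the empty-core factorization with tail $w[i+1\pp i]=\epsilon$ pushed back, i.e. effectively $B_i[i+1]=b_i$ via the trivial factorization with head/cores absent — here I need to be careful about what "anchored at $i+1$" means for a short suffix and reconcile it with the tuple picture in \figurename~\ref{figu-def}. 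Otherwise ($b_i=i-p+1$, so $w[i-p+1\pp i]$ is an anagram of $\PV$, i.e. a core), appending this core to any factorization of $w[0\pp i-p]$ anchored at $i-p+1\bmod p$ extends it, and conversely, so $B_i[i+1]=B_{i-1}[i-p+1]$; this uses that $i-p+1\equiv i+1\pmod p$.

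I expect the \textbf{main obstacle} to be the boundary/degenerate cases in the anchoring bookkeeping: precisely delimiting when a short suffix "has period $\PV$ anchored at $k$" (the factorizations with empty head, empty tail, and zero or one core, as flagged in Lemma~\ref{lemma-char-simple}'s proof), making sure the ranges in the three cases partition $\{i-p+1<k\le i+1\}$ without gap or overlap, and checking the strict-versus-nonstrict inequalities at the endpoints $k=i-p+1$, $k=b_{i-1}$, $k=b_i$, $k=i$, $k=i+1$. The "interior" of each case is a short two-line argument; getting every endpoint right is where the care is needed, and I would lay out a single table of the interval $(i-p\pp i+1]$ partitioned by the three break-points $b_{i-1}$, $b_i$, $i+1$ and fill in each sub-interval.
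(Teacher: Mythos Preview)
Your plan is correct and follows essentially the same route as the paper's proof: invoke Lemma~\ref{lemma-char-simple} at indices $i$ and $i-1$ for the $\infty$ cases, use the extension argument (the tail $w[k\pp i]$ is legal precisely when $k\ge b_i$) for the equality in Case~\ref{case2}, and treat $B_i[i+1]$ separately via the empty-tail characterization. Your flagged worry about the endpoint $k=i-p+1$ is well placed---the paper itself notes (just after the algorithm) that this value is morally governed by Case~\ref{case3} rather than Case~\ref{case1}, and the apparent overlap is resolved in the algorithm by overwriting $B[(i+1)\bmod p]$ after the loop.
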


\begin{proof}
Lemma~\ref{lemma-char-simple} implies that $B_{i}[k]=\infty$ for $i-p+1<k<b_i$ and $B_{i-1}[k]=\infty$ for $i-p+1<k<b_{i-1}$
(hence $B_{i-1}[k]=B_{i}[k]$ in this latter case).
For $b_{i}\le k  < i$, we have $\PV_{w[k \pp i]}\subseteq \PV$, so
we can extend the factorization of a suffix of $w[0\pp i-1]$ whose tail starts at position $k$ (see \figurename~\ref{figu-case2}).

Finally, note that $B_i[i+1]$ is the starting position of the maximal suffix of $w[0\pp i]$
with an empty-tail periodic factorization. If $\PV_{w[i-p+1\pp i]}\ne \PV$ (i.e., if $b_i> i-p+1$), this is just $w[b_i\pp i]$.
Otherwise, we can extend the factorization of a suffix of $w[0\pp i-1]$ whose tail starts at position $i-p+1$.
  \qed
\end{proof}

\begin{figure}[b]
\begin{center}
\begin{tikzpicture}[scale=0.8]

\draw (.5,0) rectangle (8,.5);
\draw (8,0) rectangle (9.5,.5);
\draw (9.5,0) rectangle (12,.5);
\draw (12,0) rectangle (12.5,.5);

\node at (.25,.25) {$w$};
\node at (3.75,.75) {$B_{i-1}[k]$};
\node at (8.25,.75) {$b_{i-1}$};
\node at (10.25,.75) {$k$};
\node at (9.75,.75) {$b_i$};
\node at (12.25,.75) {$i$};

\begin{scope}[yshift=.75cm]
\draw (10,-1.5) rectangle (12.5,-2);
\node at (13,-1.75) {$\subseteq \PV$};

\draw (3.5,-2.5) rectangle (5,-3);
\draw (6,-2.5) rectangle (10,-3);
\draw (10,-2.5) rectangle (12,-3);
\node at (5.5,-2.75) {$\cdots$};

\draw (3.5,-3.5) rectangle (5,-4);
\draw (6,-3.5) rectangle (10,-4);
\draw (10,-3.5) rectangle (12.5,-4);
\node at (5.5,-3.75) {$\cdots$};

\node at (1.75,-2.55) {fragment};
\node at (1.75,-2.95) {ending at $i-1$};
\node at (1.75,-3.55) {fragment};
\node at (1.75,-3.95) {ending at $i$};
\end{scope}

\end{tikzpicture} 
\caption{\label{figu-case2}
$B_i[k]=B_{i-1}[k]$ for $b_i \le k \le i$.
}
\end{center}
\end{figure}

Having read letter $w[i]$, we need to report anchored runs which end at position $i-1$.
For this, we use the following characterization.
\begin{lemma}\label{lem-report-anchored}
Let $i-p<k\le i$. A fragment $w[b\pp i-1]$ is a $k$-anchored run with period $\PV$
if and only if $B_{i-1}[k] = b\le k-2p$ and $B_i[k]>b$.
\end{lemma}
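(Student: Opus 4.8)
The plan is to unfold both conditions in terms of the definition of a $k$-anchored run and the definition of $B$. A fragment $w[b\pp i-1]$ is a $k$-anchored run with period $\PV$ precisely when (a) $w[b\pp i-1]$ is periodic with period $\PV$ anchored at $k$, which by the Observation following Definition~\ref{def-anchored} (inclusion-maximality suffices) means it must have at least two cores, i.e.\ length at least $2p$ past the head, and (b) it cannot be extended by one letter to the left nor to the right while keeping period $\PV$ anchored at $k$. I would treat the ``periodic with two cores'' part, the ``cannot extend left'' part, and the ``cannot extend right'' part separately, and match each to a piece of the claimed equivalence.

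First I would handle the right-maximality and the two-cores requirement together. By definition $B_{i-1}[k]$ is the starting position of the \emph{longest} suffix of $w[0\pp i-1]$ having period $\PV$ anchored at $k$. So $w[b\pp i-1]$ has period $\PV$ anchored at $k$ iff $B_{i-1}[k]\le b$; and among all such fragments the one starting exactly at $B_{i-1}[k]$ is the unique candidate for a $k$-anchored run ending at $i-1$ (any shorter one is properly contained in it, hence not maximal by the Observation). So the condition ``$w[b\pp i-1]$ is periodic anchored at $k$ and left-maximal'' is equivalent to $B_{i-1}[k]=b$. The two-cores condition then becomes a length condition on $w[b\pp i-1]$: with head length $h=(k-b)\bmod p$... actually cleaner is to note the head occupies positions $b\pp k-1$ (so has length $k-b$, which is $<p$ only after we also know left-maximality pins things down — more carefully, the head is the part before the first full core, which starts at the smallest index $\ge b$ congruent to $k$ mod $p$). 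Having at least two cores means the fragment extends at least $2p$ positions beyond that first core start, and since the core-start is within $p$ of $b$ from above (it is $<b+p$), the clean sufficient-and-necessary inequality is $b\le k-2p$: this guarantees the head lies in $[b,k-p-1]$ has length... here I would just verify that $b\le k-2p$ together with $B_{i-1}[k]=b$ is exactly ``at least two cores and left-maximal,'' using that $B_{i-1}[k]=b$ forbids $w[b-1\pp i-1]$ from having the period (handling the $b=0$ boundary case where non-existence substitutes for failure).

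Finally, right-maximality: $w[b\pp i-1]$ cannot be extended to $w[b\pp i]$ with period $\PV$ anchored at $k$ iff $w[b\pp i]$ does \emph{not} have period $\PV$ anchored at $k$, which by definition of $B_i$ means $B_i[k]>b$ (if $w[b\pp i]$ had the period anchored at $k$ then $B_i[k]\le b$; conversely if $B_i[k]\le b$ then the suffix starting at $B_i[k]$, and hence the one starting at $b$ which is a prefix-extension... one must check that having period $\PV$ anchored at $k$ is preserved under shrinking the start position only up to $B_i[k]$, but the point is $B_i[k]\le b$ would give $w[b\pp i]$ the period, contradicting maximality). Combining: left-maximal-and-periodic-anchored $\Leftrightarrow B_{i-1}[k]=b$; at-least-two-cores $\Leftrightarrow b\le k-2p$ (given the former); right-maximal $\Leftrightarrow B_i[k]>b$. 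Conjoining these three gives exactly $B_{i-1}[k]=b\le k-2p$ and $B_i[k]>b$. The main obstacle I anticipate is the bookkeeping around the head length and the core count — pinning down that $b\le k-2p$ is the right threshold for ``at least two cores'' rather than an off-by-$p$ variant — and carefully dispatching the boundary cases $b=0$ (left end of $w$) where ``does not exist'' replaces ``is not periodic''; the rest is a direct translation between the combinatorial definitions and the defining property of the arrays $B_{i-1}$ and $B_i$.
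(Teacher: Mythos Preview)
Your proposal is correct and follows essentially the same decomposition as the paper: left-maximality via $B_{i-1}[k]=b$, the two-core requirement via $b\le k-2p$, and right-maximality via $B_i[k]>b$. The paper's own proof is extremely terse (three sentences, asserting each correspondence without verification), whereas you spell out the bookkeeping and flag the boundary cases; in particular your concern about the $b\le k-2p$ threshold is well-founded and resolves correctly once one notes that the first core starts at a position in $[b,b+p)$ congruent to $k\bmod p$, so $b\le k-2p$ forces that start to be at most $k-2p$.
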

\begin{proof}
Clearly an anchored run ending at position $i-1$ must be a left-maximal suffix of $w[0\pp i-1]$
with a given anchor. Moreover, we must have $b\le k-2p$ so that the factorization has at least two cores
and $B_i[k]>b$ due to right-maximality.
It is easy to see that these conditions are sufficient.
\qed\end{proof}

By Lemma~\ref{lemma-algo}, most entries of $B_i$ are inherited from $B_{i-1}$,
so we use a single array $B$ and having read $w[i]$, we update its entries.
As evident from Lemma~\ref{lem-report-anchored}, each anchored run to be reported corresponds
to a modified entry.

The algorithm \CALL{AnchoredRun}{\PV,p,w,n} in \figurename~\ref{algo-run0} implements our approach.
The \textbf{while} loop increments $k$ from $b_{i-1}$ to $b_i$. 
For $k>i-p$, we set $B[k]$ to $\infty$ and possibly report a run. 
Note that $k=i-p+1$ is within the scope of Case~\ref{case3} rather than Case~\ref{case1} in Lemma~\ref{lemma-algo}. 
However, later we set $B[i+1]$ to $b_{i}$ if $b_{i}>i-p+1$ (as described in Case~\ref{case3}).
Nevertheless, if an $(i+1)$-anchored run needs to be reported, we have $B_{i-1}[i-p+1]<\infty=B_{i}[i-p+1]$,
so $b_{i-1}\le i-p+1$ and thus $k=i-p+1$ is considered in the loop.

  \begin{figure}[b!]
\begin{center}
\begin{algo}[rules]{AnchoredRun}{\PV,p,w,n}
  \SET{k}{0}
  \SET{B[0]}{k}
  \label{loop-for-begin}\DOFORI{i}{0}{n}
    \label{loop-while-begin}\DOWHILE{k \le n \textbf{\mbox{ and }} (i = n \textbf{\mbox{ or }} \PV_{\s{w}[k\pp i]} \not\subseteq \PV)}
      \IF{k > i-p}
        \SET{b}{B[k\bmod p]}
        \SET{B[k\bmod p]}{\infty}
        \IF{b\le k-2p}
        	\SET{h}{(k-b) \bmod p}
        	\SET{t}{i-k}
           \CALL{Output}{b, h,t, i-1}
        \FI
      \FI
      \label{loop-while-end}\SET{k}{k+1}
    \OD
     \IF{k > i-p+1}
      \label{loop-for-end}\SET{B[(i+1) \bmod p]}{k}
    \FI
  \OD
\end{algo}
\caption{\label{algo-run0}
Algorithm computing all the anchored runs of period $\PV$ of norm $p$ in
 a word $w$ of length $n$.
}
\end{center}
\end{figure}

\begin{theorem}
The algorithm \CALL{AnchoredRun}{\PV,p,w,n} computes all the anchored runs with period $\PV$ of norm $p$ in a word
 $w$ of length $n$ in time $O(n)$ and additional space $O(\sigma+p)$.
\end{theorem}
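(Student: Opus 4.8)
The plan is to verify two things: correctness, i.e.\ that \CALL{AnchoredRun}{} reports exactly the set of $k$-anchored runs with period $\PV$, and the claimed resource bounds. For correctness I would argue by induction on $i$ that just before the iteration of the \textbf{for} loop for a given value $i$, the array $B$ (read modulo $p$) agrees with $B_{i-1}$ on all entries that are relevant, namely on the indices $k$ with $i-p < k \le i$ together with the bookkeeping entry that will become $B_{i-1}[i-p+1]$; the invariant also records that $k=b_{i-1}$ at the top of the loop body. The \textbf{while} loop condition is exactly ``$k\le i$ (equivalently $k$ not yet past the current suffix window) and $\PV_{w[k\pp i]}\not\subseteq\PV$'', so on termination $k=b_i$ by the definition of $b_i$; the special treatment of $i=n$ is what forces the loop to run to the end and flush all runs ending at $n-1$. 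Inside the loop, for each $k$ in the range $\max(i-p+1,b_{i-1})\le k < b_i$ we are precisely in Case~\ref{case1} (or the $k=i-p+1$ subcase of Case~\ref{case3}) of Lemma~\ref{lemma-algo}, so overwriting $B[k\bmod p]$ with $\infty$ is correct; and by Lemma~\ref{lem-report-anchored}, such a $k$ with old value $b=B_{i-1}[k]\le k-2p$ and new value $\infty>b$ is exactly the situation in which $w[b\pp i-1]$ is a $k$-anchored run — so the reported tuple $(b,(k-b)\bmod p, i-k, i-1)$ is a genuine run, and conversely every $k$-anchored run ending at $i-1$ is reported here because Lemma~\ref{lem-report-anchored} characterizes it by a change of $B[k]$ at step $i$, and by Lemma~\ref{lemma-algo} the only entries that change are those with $b_{i-1}\le k < b_i$ (Case~\ref{case1}) and $B[i+1]$ (Case~\ref{case3}), the latter never triggering a report since setting it produces a finite value, not $\infty$. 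The final assignment $B[(i+1)\bmod p]\gets k$ when $k>i-p+1$ implements Case~\ref{case3}; when $k\le i-p+1$ (i.e.\ $b_i\le i-p+1$, forcing $b_i=i-p+1$) the entry $B[(i-p+1)\bmod p]$ has been left untouched, which is exactly the ``otherwise'' branch $B_i[i+1]=B_{i-1}[i-p+1]$. One also checks that entries with index $k\le i-p$, which by Lemma~\ref{lemma-char-simple} are irrelevant to any run ending at $i-1$ or later, are never consulted, so it does no harm that $B$ may hold stale data there.

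For the space bound: the array $B$ has $p$ entries, each an integer of $O(\log n)$ bits; maintaining the membership test $\PV_{w[k\pp i]}\subseteq\PV$ incrementally requires a count vector of size $\sigma$ together with a counter of how many coordinates currently violate the bound, updated in $O(1)$ time when $i$ or $k$ advances; the remaining variables are scalars. Hence $O(\sigma+p)$ additional space (on top of the read-only input).

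For the time bound the key observation is amortization of the \textbf{while} loop. The variable $k$ never decreases across the whole execution, and it is incremented once per iteration of the inner loop; since $k$ ranges over $\{0,1,\dots,n\}$ it is incremented at most $n+1$ times in total. The outer \textbf{for} loop runs $n+1$ times, and each of its iterations does $O(1)$ work outside the \textbf{while} loop (the condition check, the final conditional write to $B$, and advancing $i$, which updates the incremental count vector in $O(1)$). Each execution of the \textbf{while} body — the threshold tests, the possible $O(1)$-time \CALL{Output}{} with its arithmetic, and the $O(1)$ update to the violation counter — also costs $O(1)$. Therefore the total running time is $O(n)$. The main thing to get exactly right, and where I expect the only real subtlety, is the boundary index $k=i-p+1$: it is simultaneously the left edge of the Case~\ref{case1} window and the anchor whose $B$-value will be recomputed in Case~\ref{case3}, and one must check — as the paragraph preceding the theorem does — that whenever an $(i+1)$-anchored run must be reported we necessarily have $b_{i-1}\le i-p+1$, so that $k=i-p+1$ is actually visited by the \textbf{while} loop and the run is not missed; everything else is a routine unwinding of Lemmas~\ref{lemma-char-simple}, \ref{lemma-algo}, and~\ref{lem-report-anchored}.
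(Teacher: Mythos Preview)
Your proposal is correct and follows essentially the same approach as the paper's own proof: correctness via Lemmas~\ref{lemma-algo} and~\ref{lem-report-anchored} together with the discussion of the boundary anchor $k=i-p+1$, the $O(n)$ time bound via the monotone increase of $k$ (at most $n+1$ increments total) plus constant work per outer iteration, and the $O(\sigma+p)$ space via the array $B$ and the incrementally maintained Parikh vector with a violation counter. You spell out the loop invariant and the stale-entry issue in more detail than the paper does, but the underlying argument is the same.
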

\begin{proof}
The correctness of the algorithm comes from Lemmas~\ref{lemma-algo}-\ref{lem-report-anchored} and the discussion above.
The external \textbf{for} loop in lines~\ref{loop-for-begin}--\ref{loop-for-end} runs  $n+1$ times.
The internal \textbf{while} loop in lines~\ref{loop-while-begin}--\ref{loop-while-end}
 cannot iterate more than $n+1$ times since it starts with $k$ equal to $0$
 and ends when $k$ is equal to $n$ and $k$ can only be incremented by $1$ (in line~\ref{loop-while-end}).
The test $\PV_{\s{w}[k\pp i]} \not\subseteq \PV$ in line~\ref{loop-while-begin}
 can be realized in constant time once we store $\PV_{\s{w}[k\pp i]}$ and a counter of its components
 for which the value is greater than in $\PV$. 
This data needs to be updated once we increment $i$ in the \textbf{for} loop and $k$
 in line~\ref{loop-while-end}. 
We then need to increment the component $w[i]$ or decrement the component $w[k]$ of $\PV_{\s{w}[k\pp i]}$, respectively. 
The global counter needs to be updated accordingly.
All the other operations run in constant time. 
Thus the total time complexity of the algorithm is $O(n)$.
The space complexity comes from the number of counters ($\sigma$) and the size of the array $B$ ($p$).
\qed\end{proof}

Note that the space consumption can be reduced to $O(p)$ at the price of introducing (Monte Carlo) randomization.
Instead of storing the Parikh vectors in a plain form, we can use dynamic hash tables~\cite{hashing} so that the size is proportional
to the number of non-zero entries.

\subsection{Algorithm for Abelian Runs}\label{sub-abelian}

  \begin{figure}[b!]
 \begin{center}
\begin{algo}[rules]{Run}{\PV,p,w,n}
  \SET{k}{0}
  \SET{L}{\emptyset}
  \SET{B[0]}{k}
  \SET{Ptr[0]}{\CALL{insertAtTheEnd}{L,0}}
  \DOFORI{i}{0}{n}
    \SET{b_{\min}}{B[\CALL{getFirst}{L}]}
    \DOWHILE{k \le n \textbf{\mbox{ and }} (i = n \textbf{\mbox{ or }} \PV_{\s{w}[k\pp i]} \not\subseteq \PV)}
      \IF{k > i-p}
        \SET{b}{B[k\bmod p]}
        \SET{B[k\bmod p]}{\infty}
        \CALL{Delete}{L,Ptr[k\bmod p]}
        \SET{Ptr[k\bmod p]}{\Nil}
        \IF{b=b_{\min} \textbf{\mbox{ and }} B[\CALL{getFirst}{L}] > b \textbf{\mbox{ and }} b\le k-2p}
            \SET{h}{(k-b)\bmod p}
            \SET{t}{i-k}
           \CALL{Output}{b, h, t, i}
        \FI
      \FI
      \SET{k}{k+1}
       \IF{k > i- p+1}
      \SET{B[(i+1) \bmod p]}{k}
      \SET{Ptr[(i+1) \bmod p]}{\CALL{insertAtTheEnd}{L,(i+1) \bmod p}}
    \FI
    \OD
  \OD
\end{algo}
\caption{\label{algo-run}
Algorithm computing all the abelian runs of period $\PV$ of norm $p$ in
 a word $w$ of length $n$.
}
\end{center}
\end{figure}

In this section we extend our algorithm so that it reports abelian runs only.
For an offline solution, we could simply determine the anchored runs (using the procedure developed above) and filter out those which are not maximal. 
However, in order to obtain an online algorithm, we need a more subtle approach,
which is based on the following characterization. 

\begin{lemma}\label{lem-report-abelian}
A fragment $w[b\pp i-1]$ is an abelian run with period $\PV$
if and only if it is an anchored run (with period $\PV$)
and for each $k'$ the inequalities $B_{i-1}[k']\ge b$ and $B_{i}[k']>b$ hold.
\end{lemma}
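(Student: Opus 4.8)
The plan is to characterize abelian runs among anchored runs via the maximality condition in the definition of abelian run, and then translate that condition into statements about the arrays $B_{i-1}$ and $B_i$.

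First I would establish the forward direction. Suppose $w[b\pp i-1]$ is an abelian run with period $\PV$. By the observation preceding the lemma, every abelian run is an anchored run, so $w[b\pp i-1]$ is a $k$-anchored run for the anchor $k$ of its factorization. For the inequalities, fix any $k'$ with $0 \le k' < p$. If $B_{i-1}[k'] < b$, then there is a suffix $w[B_{i-1}[k']\pp i-1]$ of $w[0\pp i-1]$ properly containing $w[b\pp i-1]$ and having abelian period $\PV$; by Lemma~\ref{lemma-maximality} this contradicts $w[b\pp i-1]$ being an abelian run. Hence $B_{i-1}[k'] \ge b$ for all $k'$. Similarly, if $B_i[k'] \le b$, then $w[B_i[k']\pp i]$ is a fragment ending at position $i$ with abelian period $\PV$ and containing $w[b\pp i-1]$; if $B_i[k']<b$ it properly contains it (contradiction by Lemma~\ref{lemma-maximality}), while if $B_i[k']=b$ then $w[b\pp i]$ itself is periodic with period $\PV$, contradicting right-maximality of the abelian run. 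So $B_i[k']>b$ for all $k'$.

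For the converse, suppose $w[b\pp i-1]$ is an anchored run and the two families of inequalities hold. Being an anchored run, it is periodic with period $\PV$ (with at least two cores). It remains to check maximality as in the definition of abelian run: that neither $w[b-1\pp i-1]$ nor $w[b\pp i]$ is periodic with period $\PV$. If $w[b\pp i]$ were periodic with period $\PV$ anchored at some $k'$, then $B_i[k'] \le b$ (the suffix $w[b\pp i]$ witnesses a suffix with this anchor starting no later than $b$), contradicting $B_i[k']>b$. If $w[b-1\pp i-1]$ were periodic with period $\PV$ anchored at some $k'$, then, since any periodic factorization of $w[b-1\pp i-1]$ has a suffix starting at $b-1$ with anchor $k'$, we get $B_{i-1}[k'] \le b-1 < b$, contradicting $B_{i-1}[k'] \ge b$. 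Hence $w[b\pp i-1]$ is maximal and thus an abelian run.

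The main obstacle I expect is the careful handling of the boundary case $B_i[k']=b$ versus $B_i[k']<b$ in the forward direction: a fragment $w[b\pp i]$ with the same left endpoint is not ``properly contained'' in the sense of Lemma~\ref{lemma-maximality}, so that lemma does not directly apply, and one must instead invoke right-maximality from the definition of abelian run directly. A secondary subtlety is making sure that ``periodic with period $\PV$'' for the one-letter extensions is correctly matched with the existence of \emph{some} anchor $k'$, which is exactly why the quantifier in the lemma ranges over all $k'$ rather than just the anchor of the run itself; I would state this explicitly when converting between the run's factorization and the array entries.
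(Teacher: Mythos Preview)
Your approach is essentially the paper's: invoke Lemma~\ref{lemma-maximality} to translate maximality into conditions on $B_{i-1}$ and $B_i$. The converse direction is fine. However, your ``main obstacle'' paragraph rests on a misreading that introduces a small gap.

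You assert that $w[b\pp i-1]$ is not properly contained in $w[b\pp i]$ because they share the same left endpoint, and therefore split off the case $B_i[k']=b$ and handle it by claiming $w[b\pp i]$ is periodic. But the paper's notion of proper containment is ``$i'<i$ and $j\le j'$, or $i'\le i$ and $j<j'$'' (stated just before Lemma~\ref{lemma-maximality}), so $w[b\pp i-1]$ \emph{is} properly contained in $w[b\pp i]$. Hence Lemma~\ref{lemma-maximality} applies uniformly to $B_i[k']\le b$: the fragment $w[B_i[k']\pp i]$ has abelian period $\PV$ and properly contains $w[b\pp i-1]$, contradicting that the latter is an abelian run. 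No case split is needed.

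Your workaround for the $B_i[k']=b$ case is not justified as written: from $B_i[k']=b$ you only get that $w[b\pp i]$ has abelian period $\PV$ anchored at $k'$, not that it is \emph{periodic} (i.e., admits a factorization with at least two cores). The length bound $|w[b\pp i]|\ge 2p+1$ coming from the run does not force two cores for the anchor $k'$, since a factorization with one core can have length up to $3p-2$. So ``contradicting right-maximality of the abelian run'' does not follow directly. The clean fix is simply to drop the case split and apply Lemma~\ref{lemma-maximality} to both $B_i[k']<b$ and $B_i[k']=b$ at once; this is what the paper's (terse) proof does.
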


\begin{proof}
By Lemma~\ref{lemma-maximality}, an abelian run of period $\PV$ cannot be properly contained in a fragment with period $\PV$
(anchored at some $k'$).
Conditions involving $B_{i-1}[k']$ and $B_i[k']$ enforce left-maximality and right-maximality, respectively.
Since each abelian run is an anchored run (with the same period) and since all
anchored runs are periodic, the claim follows.
\qed\end{proof}
To apply Lemma~\ref{lem-report-abelian}, it suffices to find an anchor $k$ such that $b=B_{i-1}[k]=\min_{k'}B_{i-1}[k']<\min_{k'}B_{i}[k']$.
There can be several such anchors and in case of ties we are going to detect the one for which the factorization of $w[b\pp i-1]$ has shortest tail.
This factorization maximizes the number of cores, so if $w[b\pp i-1]$ is an anchored run with any anchor, it is with that one in particular.
Note that the \textbf{while} loop in lines~\ref{loop-while-begin}--\ref{loop-while-end} of Algorithm \textsc{AnchoredRun} processes anchors $b_{i-1}\le k < b_i$ in the order of decreasing tail lengths and updates the underlying values $B[k]$ from $B_{i-1}[k]$ to $B_{i}[k]$.
For the sought anchor $k$ this update strictly increases the value $\min_{k'}B[k']$, and moreover this is 
the first increase of the minimum within a given iteration of the outer \textbf{for} loop.
Hence, we record the original minimum and check for an abelian run only if $\min_{k'}B[k']$ increases from that value.

To implement the procedure described above, we need to efficiently compute  the smallest element in the array $B$. 
For this, recall that $B[j]$ can only be modified from $\infty$ to $k$ 
(and the value $k$ does not decrease throughout the algorithm) or from some value  back to $\infty$.
We maintain a doubly-linked list $L$ of all indices $j$ with finite $B[j]$ 
such that the order of indices $j$ in the list is consistent with the order of values $B[j]$.
To update the list, it suffices to insert the index $j$ to the end of list while
 setting $B[j]$ to $k$, and remove it from the list setting $B[j]$ to $\infty$.
Then the smallest value in $B$ is attained at an argument stored as the first element of the
 list $L$ (or $\infty$, if the list is empty).

The algorithm \CALL{Run}{\PV,p,w,n}, depicted in \figurename~\ref{algo-run},
 implements the approach described above.
It uses the following constant-time functions to operate on lists:
\begin{itemize}
\item
 \CALL{insertAtTheEnd}{L,e} that inserts $e$ at the end of
 the doubly-linked list $L$ and returns a pointer to the location of $e$ in the list;
\item
\CALL{Delete}{L,\ptr} that deletes the element pointed by $\ptr$ from the
doubly-linked list $L$;
\item
\CALL{getFirst}{L} that returns the first element of the list $L$ (0 if the list is empty).
\end{itemize}
The algorithm also uses an array $Ptr$ which maps any anchor $j$ to a pointer of the corresponding location in the list $L$ 
(or $\Nil$ if $B[j]=\infty$). 

The discussion above proves that \CALL{Run}{\PV,p,w,n} correctly computes abelian runs with period $\PV$ in $w$. 
Its running time is the same as that of \CALL{AnchoredRun}{\PV,p,w,n} since the structure of the computations remains
the same while additional instructions run in constant time. 
Memory consumption is still $O(p+\sigma)$ because both $L$ and $Ptr$ take $O(p)$ space.

\begin{theorem}
The algorithm \CALL{Run}{\PV,p,w,n} computes all the abelian runs with period $\PV$ of norm $p$ in a word
 $w$ of length $n$ in time $O(n)$ and additional space $O(\sigma+p)$, which can be reduced to $O(p)$ using randomization.
\end{theorem}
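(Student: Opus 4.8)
The plan is to prove correctness first, and then dispatch the $O(n)$ time bound and the $O(\sigma+p)$ space bound (with its randomized refinement) essentially by inspection, exactly as for \textsc{AnchoredRun}. For correctness I would work from Lemma~\ref{lem-report-abelian}: $w[b\pp i-1]$ is an abelian run with period $\PV$ iff it is an anchored run and, for every anchor $k'$, both $B_{i-1}[k']\ge b$ and $B_i[k']>b$. Everything \textsc{Run} adds to \textsc{AnchoredRun} is machinery to detect these conditions online, so I would start by checking the auxiliary structures. The list $L$ together with the array $Ptr$ keeps the anchors $j$ with $B[j]<\infty$ in nondecreasing order of $B[j]$: finite values enter $B$ only through an assignment $B[j]\gets k$ with $k$ nondecreasing over the whole execution, such a value is appended to the tail of $L$ (with $Ptr[j]$ pointing at that node), and it is never modified before $j$ is deleted; hence \CALL{getFirst}{L} always returns an index at which $B$ attains its minimum (with the convention that an empty $L$ corresponds to $B[j]=\infty$ for every $j$). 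I would also recall, from the analysis of \textsc{AnchoredRun}, that during iteration $i$ of the outer \textbf{for} loop the \textbf{while} loop rewrites $B$ from $B_{i-1}$ into $B_i$ by single-entry updates: it resets each anchor $k$ in the range $\max(i-p+1,b_{i-1})\le k<b_i$ from its finite value to $\infty$, and it performs the assignment to $B[(i+1)\bmod p]$; crucially these updates are processed in order of \emph{decreasing} tail length $i-k$.

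Recording $b_{\min}=\min_{k'}B_{i-1}[k']$ at the start of iteration $i$, the key claim I would establish is the following: a fragment $w[b\pp i-1]$ with $b=b_{\min}$ satisfies the two families of inequalities in Lemma~\ref{lem-report-abelian} iff, during the above rewriting, the deletion of some anchor $k$ — whose erased value is then necessarily $b_{\min}$ — makes the current minimum of $B$ strictly exceed $b_{\min}$; and the first time this occurs, the current $k$ is, among all anchors realizing $b_{\min}$, the one with the shortest tail, hence the one maximizing the number of cores. Together with the test $b\le k-2p$ (which by Lemma~\ref{lem-report-anchored} certifies that $w[b\pp i-1]$ is a $k$-anchored run, in particular periodic with period $\PV$), Lemma~\ref{lem-report-abelian} then gives that $w[b\pp i-1]$ is an abelian run, and \textsc{Run} outputs it. For the converse, every abelian run ending at position $i-1$ is caught: by Lemma~\ref{lemma-maximality} its left endpoint equals $\min_{k'}B_{i-1}[k']=b_{\min}$, and right-maximality forces the minimum of $B$ to increase strictly above $b_{\min}$ during the rewriting, so the reporting condition fires. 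As in \textsc{AnchoredRun}, the borderline anchor $k=i-p+1$ is handled because the assignment to $B[(i+1)\bmod p]$ is now made inside the loop and because, whenever an $(i+1)$-anchored run has to be reported, $b_{i-1}\le i-p+1$ and so $k=i-p+1$ is indeed visited by the \textbf{while} loop; and by Corollary~\ref{coro-run} each starting position carries at most one abelian run with period $\PV$, so no run is emitted twice.

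The time bound is immediate: the outer \textbf{for} loop runs $n+1$ times, the inner \textbf{while} loop iterates at most $n+1$ times overall because $k$ starts at $0$, only increases, by $1$ at a time, and halts at $n$, and each added operation — \CALL{insertAtTheEnd}{L,e}, \CALL{Delete}{L,\ptr}, \CALL{getFirst}{L}, and the $Ptr$ bookkeeping — costs $O(1)$; the test $\PV_{\s{w}[k\pp i]}\not\subseteq\PV$ is still evaluated in $O(1)$ using the incremental representation of $\PV_{\s{w}[k\pp i]}$ together with the counter of over-full components, exactly as in \textsc{AnchoredRun}. Hence the total time is $O(n)$. For space, the additional data are the $\sigma$ counters describing $\PV_{\s{w}[k\pp i]}$ (plus the single over-full-components counter), the array $B$ of length $p$, the list $L$ of at most $p$ nodes, and the array $Ptr$ of length $p$, i.e., $O(\sigma+p)$ in total. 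The $\sigma$ term comes only from storing a Parikh vector as a plain array; representing it instead by a dynamic hash-based dictionary holding only its at most $p$ nonzero components removes this dependence and yields $O(p)$ space at the price of Monte Carlo randomization, exactly as already observed for \textsc{AnchoredRun}.

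I expect the only real difficulty to be the key claim in the second paragraph: one must verify carefully that reporting precisely at the first moment the tracked minimum rises above the recorded $b_{\min}$ is equivalent to the ``for every $k'$'' conditions of Lemma~\ref{lem-report-abelian}, and that breaking ties toward the shortest-tail factorization is the correct choice, so that the guard $b\le k-2p$ decides anchoredness of $w[b\pp i-1]$ under the best possible anchor. The running-time and space statements are routine, being inherited almost verbatim from \textsc{AnchoredRun}.
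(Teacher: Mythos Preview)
Your proposal is correct and follows essentially the same approach as the paper: correctness is derived from Lemma~\ref{lem-report-abelian} via the observation that an abelian run ending at $i-1$ corresponds exactly to the first moment the tracked minimum of $B$ rises above the recorded $b_{\min}$ (with ties broken toward the shortest-tail anchor), the list $L$ with its nondecreasing-value invariant supplies that minimum in $O(1)$, and the time and space bounds are inherited verbatim from \textsc{AnchoredRun} plus the $O(1)$ list operations. Your write-up is in fact a more detailed elaboration of the paper's own brief discussion preceding the theorem.
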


\subsection{Example}\label{sub-example}

Let us see the behaviour of the algorithm on
$\Sigma = \{\sa{a},\sa{b}\}$, $w=\sa{abaababaabbb}$ and $\PV=(2,2)$:\\

$k=0 \quad B = [0, \infty, \infty, \infty] \quad L= (0)$

$i=0 \quad  \PV_{w[0]}\subseteq \PV \quad B = [0, 0, \infty, \infty] \quad L= (0,1)$

$i=1 \quad \PV_{w[0\pp 1]}\subseteq \PV  \quad B = [0, 0, 0, \infty] \quad L= (0,1,2)$

$i=2 \quad \PV_{w[0\pp 2]}\subseteq \PV  \quad B = [0, 0, 0, 0] \quad L= (0,1,2,3)$

$i=3 \quad \PV_{w[0\pp 3]}\not\subseteq \PV \quad b=0 \quad B=[\infty, 0,0,0] \quad L= (1,2,3) \quad k=1$ 

\hspace{1.1cm}$\PV_{w[1\pp 3]}\subseteq \PV \quad  B = [1, 0, 0, 0] \quad L= (1,2,3,0)$

$i=4  \quad \PV_{w[1\pp 4]}\subseteq \PV$

$i=5 \quad  \PV_{w[1\pp 5]}\not\subseteq \PV \quad k=2$

\hspace{1.1cm}$\PV_{w[2\pp 5]}\not\subseteq \PV \quad b=0 \quad B = [1, 0, \infty, 0] \quad L= (1,3,0) \quad k=3$

\hspace{1.1cm}$\PV_{w[3\pp 5]}\subseteq \PV \quad  B = [1, 0, 3, 0] \quad L= (1,3,0,2)$

$i=6 \quad \PV_{w[3\pp 6]}\subseteq \PV$

$i=7 \quad  \PV_{w[3\pp 7]}\not\subseteq \PV \quad k=4$

\hspace{1.1cm}$\PV_{w[4\pp 7]}\subseteq \PV$

$i=8 \quad  \PV_{w[4\pp 8]}\not\subseteq \PV \quad k=5$

\hspace{1.1cm}$\PV_{w[5\pp 8]}\not\subseteq \PV \quad b=0 \quad  B = [1, \infty, 3, 0] \quad L= (3,0,2)\quad k=6$

\hspace{1.1cm}$\PV_{w[6\pp 8]}\subseteq \PV \quad  B = [1, 6, 3, 0] \quad L= (3,0,2,1)$

$i=9  \quad \PV_{w[6\pp 9]}\subseteq \PV$

$i=10 \quad  \PV_{w[6\pp 10]}\not\subseteq \PV \quad k=7$

\hspace{1.25cm}$\PV_{w[7\pp 10]}\subseteq \PV$

$i=11 \quad  \PV_{w[7\pp 11]}\not\subseteq \PV \quad k=8$

\hspace{1.25cm}$\PV_{w[8\pp 11]}\not\subseteq \PV \quad b=1 \quad B=[\infty,6,3,0] \quad L=(3,2,1) \quad k=9$

\hspace{1.25cm}$\PV_{w[9\pp 11]}\not\subseteq \PV \quad b=6 \quad B=[\infty,\infty,3,0] \quad L=(3,2)\quad k=10$

\hspace{1.25 cm}$\PV_{w[10\pp 11]}\subseteq \PV \quad B=[10,\infty,3,0] \quad L=(3,2,0)$

$i=12$

\hspace{1.25cm}$b=3 \quad B = [10, \infty, \infty, 0] \quad L= (3,0) \quad k = 11$

\hspace{1.25cm}$b =0 \quad B = [10, \infty, \infty, \infty] \quad L= (0)$

\hspace{1.25cm}$h=3 \quad t=1 \quad \CALL{Output}{0,3,1,11} \quad k=12$

\hspace{1.25cm}$b=10 \quad B=[\infty,\infty,\infty,\infty] \quad L=() \quad k=13$


\section{Computing Abelian Runs with Fixed Parikh Vector Norm}\label{sect-newer}
In this section we develop an $O(np)$-time algorithm to compute all abelian runs
with periods of norm $p$. First, we describe the algorithm for anchored runs
and later generalize it to abelian runs.
\subsection{Anchored Runs}
Let us start with a simple offline algorithm which works in $O(n)$ time to compute $k$-anchored runs with period of norm $p$ for fixed values $p$ and $k$.
This method is similar to the algorithm of Matsuda et al.~\cite{matsuda2014computing} briefly described in Section~\ref{sect-prev}.
Namely, it suffices to compute maximal abelian powers with periods of norm $p$ anchored at $k$, and then extend them by a head and a tail.

Define a \emph{block} as any fragment of the form $w[i\pp i+p-1]$ such that $i \equiv k \pmod p$.
Note that the cores in decompositions  with anchor $k\bmod p$ are blocks. 
Finding $k$-anchored powers with periods of a given norm $p$ is very easy if the anchor is fixed.
We consider consecutive blocks, naively check if they are abelian-equivalent and merge any
maximal chains of abelian-equivalent blocks. Determining the head and the tail of the $k$-anchored runs is also simple.
For each $i \equiv k \pmod p$ we compute the longest suffix of $w[0\pp i-1]$ and the longest prefix of $w[i+p\pp n-1]$
whose Parikh vectors are contained in $\PV_{w[i\pp i+p-1]}$.

This approach can be implemented online in $O(\sigma+p)$ space as follows: we scan consecutive blocks and (naively)
check their abelian equivalence. Whenever we read a full block (say, starting at position $i$),
we compute the longest suffix $w[b_{i-1}\pp i-1]$ of $w[0\pp i-1]$ whose Parikh vector is contained in $\PV_{w[i\pp i+p-1]}$.
This gives a periodic factorization of $w[b_{i-1}\pp i+p-1]$ anchored at $k \bmod p$. 
We then try to extend it to the right while reading further characters. 
Once it is impossible to extend the factorization, say by letter $w[j+1]$, we declare $w[b_{i-1}\pp j]$ as a maximal fragment with period $\PV_{w[i\pp i+p-1]}$ anchored at $k$.
If the decomposition has at least two cores, we report an anchored run. 
If we succeed to extend by a full block (i.e., if $\PV_{w[i\pp i+p-1]}=\PV_{w[i+p\pp i+2p-1]}$), we do not restart the algorithm but instead we continue to extend the factorization.
This way, we guarantee that $b_{i-1}>i-p$ whenever we start building a new factorization.

Clearly, the procedure described above computes all $k$-anchored runs with period of norm $p$. 
To compute all anchored runs, we simply run it in parallel for all $p$ possible anchors.

\begin{theorem}
There is an algorithm which computes online all the anchored runs with periods of norm $p$ in a word
 $w$ of length $n$ over an alphabet of size $\sigma$ in time $O(n p)$ and additional space $O(p(\sigma+p))$, which can be reduced to $O(p^2)$ using randomization.
\end{theorem}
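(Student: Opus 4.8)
The plan is to combine the single-anchor online procedure described just above the statement with a parallel bookkeeping over all $p$ possible anchors. First I would fix the norm $p$ and, for each residue $k \in \{0,\dots,p-1\}$, instantiate the online algorithm that scans $w$ block by block (where a block for anchor $k$ is a fragment $w[i\pp i+p-1]$ with $i \equiv k \pmod p$), checks abelian equivalence of consecutive blocks naively, merges maximal chains of equivalent blocks into maximal abelian powers, and extends each power by a maximal head to the left and a maximal tail to the right whose Parikh vectors are contained in the period $\PV_{w[i\pp i+p-1]}$. I would argue correctness by invoking the discussion preceding the statement: for a fixed $k$, this procedure reports exactly the $k$-anchored runs with periods of norm $p$, and since every anchored run with period of norm $p$ is $k$-anchored for exactly one residue $k$, running all $p$ copies in parallel reports each anchored run exactly once.

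Next I would analyze the running time. The key observation is that for a fixed anchor $k$, the total work done by its copy of the algorithm is $O(n)$: each of the $\Theta(n/p)$ blocks is read once, abelian equivalence of two adjacent blocks is checked by maintaining a Parikh-vector difference together with a mismatch counter (so each letter contributes $O(1)$ amortized work), and the head/tail extensions never re-scan a character, because whenever a factorization can be extended by a whole further block we continue rather than restart — this is precisely the point guaranteeing $b_{i-1} > i-p$ when a new factorization begins, so the heads and tails together consume each text position a bounded number of times. Summing $O(n)$ over the $p$ anchors gives the claimed $O(np)$ time bound.

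For space, each of the $p$ anchor-copies needs to store a Parikh vector (its current period, or a difference vector) plus a mismatch counter and a constant number of indices; storing a Parikh vector in plain form costs $O(\sigma)$, and there may be an additional $O(p)$ words per copy to hold the current block / head / tail information, for $O(\sigma+p)$ per copy and $O(p(\sigma+p))$ overall. Replacing the plain Parikh-vector arrays by dynamic hash tables storing only the non-zero entries (as already remarked after the \textsc{AnchoredRun} analysis) shrinks each copy to $O(p)$ space and hence yields the $O(p^2)$ Monte Carlo bound; I would note that the set of distinct letters in any single block or bounded-length head/tail has size at most $p$, so the hashed representation is legitimately of size $O(p)$.

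I expect the main obstacle to be the amortized time accounting for the head and tail extensions across all anchors simultaneously: one must make sure that, within a single anchor's run, the left-extension of a newly started factorization does not reach back past the previous block (which the "do not restart, just continue extending" policy guarantees, but this deserves a careful statement), and that the right-extensions of consecutive maximal powers for the same anchor use disjoint portions of the text so that each character is charged $O(1)$ times per anchor. Once that charging argument is pinned down, multiplying by $p$ anchors and adding the space bookkeeping is routine.
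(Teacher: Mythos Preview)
Your proposal is correct and follows essentially the same approach as the paper: fix the norm $p$, run the single-anchor online procedure (block-by-block scan, naive equivalence check, maximal head/tail extension with the ``continue rather than restart'' policy guaranteeing $b_{i-1}>i-p$) in parallel for all $p$ residues, and sum the $O(n)$ cost per anchor. The space accounting and the hashed-Parikh-vector randomized variant also match the paper's treatment.
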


\subsection{Abelian Runs}
Let us first slightly modify the algorithm presented in the previous section.
Observe that whenever we start a new phase having just read a block $w[i\pp i+p-1]$, instead of performing the computations using
a simple procedure described above, we could launch the algorithm of Section~\ref{sub-anchored} for $w[\max(i-p,0)\pp ]$ and $\PV=\PV_{w[i\pp i+p-1]}$, simulate it until it needs to read $w[i+p]$ and then feed it with newly read letters until the maximal extension of $w[i\pp i+p-1]$ anchored at $k$ is found
(i.e., until the respective entry of the $B$ array is set to $\infty$). 
Other anchored runs output by the algorithm should be ignored, of course.
As before, if such a process is running while we have completed reading a subsequent block, we do not start a new phase.

It is easy to see that such an algorithm is equivalent to the previous one.
However, if we use the algorithm of Section~\ref{sub-abelian} instead, we automatically get a possibility to
check whether the maximal extension of $w[i\pp i+p-1]$ anchored at $k$ is a maximal fragment with period $\PV_{w[i\pp i+p-1]}$.
Note that we start the simulation at a position $\max(i-p,0)$ which is smaller than $b_{i-1}$, unless the latter is $0$.
This guarantees that left-maximality is correctly verified despite the fact the fragment prior to position $i-p$ is ignored in the simulation. 
As before, we disregard any other abelian run that the algorithm of Section~\ref{sub-abelian} may return. 
We run this process in parallel for all possible anchors to guarantee that each abelian run with period of norm $p$ is reported exactly once. 
More precisely, in ambiguous cases a run is reported for the anchor corresponding to the factorization with shortest tail,
just as in Section~\ref{sub-abelian}.

\begin{theorem}
There is an algorithm which computes online all the abelian runs with periods of norm $p$ in a word
 $w$ of length $n$ over an alphabet of size $\sigma$ in time $O(n p)$ and additional space $O(p(\sigma+p))$, which can be reduced to $O(p^2)$ using randomization.
\end{theorem}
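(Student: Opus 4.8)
The plan is to reduce the abelian-run theorem to the anchored-run algorithm from Section~\ref{sub-anchored} and the abelian-run algorithm from Section~\ref{sub-abelian}, combined with the blocking scheme already used for anchored runs with periods of a fixed norm. Concretely, I would run, in parallel for each of the $p$ possible anchors $k\in\{0,\ldots,p-1\}$, the following process: scan $w$ block by block (a block being a fragment $w[i\pp i+p-1]$ with $i\equiv k\pmod p$), naively checking abelian equivalence of consecutive blocks; whenever a full block $w[i\pp i+p-1]$ is read and we are not already inside a running extension phase, launch the \textsc{Run} procedure of Section~\ref{sub-abelian} on the suffix starting at $\max(i-p,0)$ with Parikh vector $\PV=\PV_{w[i\pp i+p-1]}$, simulate it up to the point where it would read $w[i+p]$, and then feed it the subsequently read letters until the entry of the $B$ array anchored at $k$ is reset to $\infty$ (signalling that the maximal fragment with period $\PV_{w[i\pp i+p-1]}$ anchored at $k$ has been found). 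All abelian runs reported by \textsc{Run} for other anchors are discarded; the one (if any) for anchor $k$ is reported.

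For correctness I would argue in three steps. First, every abelian run $w[i'\pp j']$ with period of norm $p$ has, by Lemma~\ref{lemma-maximality} and Corollary~\ref{coro-run}, a periodic factorization with at least two cores; its leftmost core is a block for the anchor $k=(i'+h)\bmod p$, so the phase for that anchor starting at the first such core will launch \textsc{Run} with exactly $\PV=\PV_{w[i'+h\pp i'+h+p-1]}$, and since the simulation starts at $\max(i-p,0)$, which lies at or before $b_{i-1}$ (or equals $0$), left-maximality is verified correctly despite ignoring the prefix before position $i-p$; by Lemma~\ref{lem-report-abelian} the run is then detected. Second, \textsc{Run} only ever outputs genuine abelian runs of its argument word, and a genuine abelian run of a suffix of $w$ that survives the left-maximality check (because the simulation started early enough) is a genuine abelian run of $w$; hence no spurious run is reported. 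Third, to avoid double reporting I would invoke the tie-breaking rule already established in Section~\ref{sub-abelian}: a run with several valid anchors is reported only for the anchor whose factorization has the shortest tail, and since \textsc{Run} internally realizes exactly that choice, running the parallel copies and keeping only the output matching the copy's own anchor yields each abelian run exactly once. The ``we do not start a new phase if one is running'' rule guarantees, as in the anchored-run case, that $b_{i-1}>i-p$ whenever a new phase begins, which is what makes the $\max(i-p,0)$ starting point sufficient.

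For the complexity, each of the $p$ parallel copies processes $w$ with the per-letter cost of \textsc{AnchoredRun}/\textsc{Run}, which is amortized $O(1)$ per position by the analysis already given (the inner \textbf{while} loop's counter $k$ is monotone and bounded by $n$), so each copy runs in $O(n)$ time and the total is $O(np)$. The naive block-equivalence checks add $O(n)$ per anchor, hence $O(np)$ overall, absorbed into the bound. Each copy uses $O(\sigma+p)$ space for its $B$ array, counters, the list $L$ and the $Ptr$ array, giving $O(p(\sigma+p))$ in total; replacing the plain Parikh-vector storage by dynamic hash tables as in~\cite{hashing} reduces each copy to $O(p)$ space and the total to $O(p^2)$, at the cost of Monte Carlo randomization. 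The main obstacle I anticipate is the bookkeeping around the parallel copies and the tie-breaking: one has to be careful that the ``shortest tail'' anchor for which a given abelian run is output is precisely the anchor of the copy that discovers it, so that the discard-others rule never drops a run that no other copy will report and never lets two copies both report the same run; verifying this hinges on re-examining the order in which \textsc{Run} processes anchors $b_{i-1}\le k<b_i$ (decreasing tail length) and confirming it agrees with the cross-copy choice, which is the delicate point of the proof.
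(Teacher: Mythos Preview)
Your proposal is correct and follows essentially the same approach as the paper: for each anchor $k$, replace the naive head/tail extension by a simulation of the \textsc{Run} procedure of Section~\ref{sub-abelian} on the suffix starting at $\max(i-p,0)$, discard outputs for anchors other than $k$, and rely on the shortest-tail tie-breaking rule so that the $p$ parallel copies together report each abelian run exactly once. The paper's own argument is considerably terser than yours---it states the construction and the three key points (starting at $\max(i-p,0)$ suffices for left-maximality because $b_{i-1}>i-p$ when a new phase begins, outputs for other anchors are ignored, and ties are broken by shortest tail) without the explicit three-step correctness breakdown you give---but the content and the complexity accounting are the same.
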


\section{Offline Algorithm for Computing All Abelian Runs}\label{sect-all}

In this section we present an $O(n^2)$-time offline algorithm which computes all the abelian runs.
As a starting point, we use the set of all anchored runs computed by the algorithm by Matsuda et al. (see Section~\ref{sect-prev}).
Recall that all abelian runs are anchored runs with the same period.
Hence, it suffices to filter out those anchored runs which are properly contained in another anchored run with the same period. We also need to make
sure that every abelian run is reported once only (despite possibly being $k$-anchored for different anchors $k$).

Note that this filtering can be performed independently for distinct periods. If we have a list of anchored runs with a fixed period, sorted
by the starting position, it is easy to retrieve the abelian runs of that period with a single scan of the list. 
Ordering by the starting position can be performed together for all periods so that it takes $O(r+n)$ time where $r$ is the number of all anchored runs. 
Hence, the main difficulty is grouping according to the period. 
For this, we shall assign to each fragment of $w$ an \emph{identifier}, so that two fragments are abelian-equivalent if and only if their identifiers are equal. 
The identifiers of periods can be easily retrieved since given a $k$-anchored run, we can easily locate one of the cores of the underlying factorization.

Thus, in the remaining part of this section we design a naming algorithm which assigns the identifiers.
A naive solution would be to generate the Parikh vectors of all substrings of
$w$, sort these vectors removing duplicates, and give each fragment a rank of its Parikh vector in that order.
However, already storing the Parikh vectors can take prohibitive $\Theta(n^2\sigma)$ space.

To overcome this issue, we use the concept of \emph{diff-representation}, originally introduced in the context of abelian periods~\cite{DBLP:conf/stacs/KociumakaRR13}.
Observe that in a sense the Parikh vectors of fragments can be generated efficiently: for a fixed $p$,
we can first generate $\PV_{w[0\pp p-1]}$, then update it to $\PV_{w[1\pp p]}$, and so on until we reach $\PV_{w[n-p\pp n-1]}$. 
In other words, the Parikh vectors of all fragments of length $p$ can be represented in a sequence so that
the total Hamming distance of the adjacent vectors is $O(n)$.
The diff-representation, designed to manipulate sequences satisfying such a property, is formally defined as a sequence
of single-entry changes such that the original sequence of vectors is a subsequence of intermediate results when applying this operations
starting from the null vector (of the fixed dimension $r$).
 Note that the diff-representation of a sequence of Parikh vectors of all fragments of $w$ can be computed in time $O(n^2)$ proportional
to its size. 
 The following result lets us efficiently assign identifiers to its elements.

\begin{lemma}[\cite{DBLP:conf/stacs/KociumakaRR13}]
Given a sequence of vectors of dimension $r$ represented using a diff-representation of size $m$, consider the problem of assigning  
integer identifiers of size $n^{O(1)}$ so that equality of vectors is equivalent to equality of their identifiers.
It can be solved in $O(r+m\log r)$ time using a deterministic algorithm and in $O(r+m)$ time using a Monte Carlo algorithm
which is correct with high probability ($1-\frac{1}{(r+m)^c}$ where $c$ can be chosen arbitrarily large).
\end{lemma}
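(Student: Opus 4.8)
The plan is to establish the two running-time bounds by different means: the Monte Carlo bound by algebraic fingerprinting, and the deterministic bound by a balanced divide-and-conquer over the $r$ coordinates that keeps all equality tests offline.

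For the randomized $O(r+m)$ bound, I would pick a prime $q=n^{\Theta(1)}$ (large enough, as fixed below) and a uniformly random $x\in\{0,\dots,q-1\}$, and assign to a vector $v\in\mathbb{Z}^{r}$ the fingerprint $\phi(v)=\sum_{j=1}^{r}v_{j}\,x^{j}\bmod q$. After an $O(r)$-time precomputation of $x^{1},\dots,x^{r}\bmod q$, any single-entry change that modifies coordinate $j$ by $\delta$ updates $\phi$ in $O(1)$ time via $\phi\leftarrow(\phi+\delta x^{j})\bmod q$. Scanning the diff-representation thus produces the fingerprints of all $m$ vectors of the sequence in $O(r+m)$ total time; a final pass hashes (or radix-sorts) these values and assigns each vector the rank of its fingerprint. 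Equal vectors always get equal fingerprints, hence equal identifiers; for any fixed pair $u\ne v$ the polynomial $\phi(u)-\phi(v)$ is nonzero of degree at most $r$ over $\mathbb{F}_{q}$, hence vanishes for at most $r$ choices of $x$, so the collision probability is at most $r/q$. A union bound over the $O(m^{2})$ pairs together with the choice $q\ge m^{2}r(r+m)^{c}=n^{O(1)}$ gives the stated high-probability guarantee, and the fingerprints (or their ranks) are integers of size $n^{O(1)}$.

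For the deterministic $O(r+m\log r)$ bound I would split $\{1,\dots,r\}$ recursively into (nearly) equal halves, obtaining a binary recursion tree of depth $O(\log r)$; at a node covering a coordinate range $R$, the name of a vector restricted to $R$ is the pair consisting of the name of its restriction to the left half of $R$ and the name of its restriction to the right half of $R$, with scalars (leaves) named by sorting. The crucial observation is that a single-entry change to the full vector is, at every level of the tree, a change to the unique node whose range contains the affected coordinate; hence at each level the restricted vectors change only $m$ times in total, and at each level at most $m+1$ distinct restricted-vector values ever occur (the common all-zero value plus one per change), so they can be named by integers in $\{0,\dots,m\}$ with $0$ reserved for the zero vector. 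I would process the levels bottom-up. At the leaf level I collect the at most $m+1$ scalar values that appear (including $0$), sort them, and name them. At an internal level, having already fixed the names one level below, I scan that level's at most $m$ changes in time order while advancing pointers through the sparse name-histories of the affected children, thereby forming the at most $m$ pairs of lower-level names that occur; since each component lies in $\{0,\dots,m\}$, a pair is a single integer in $\{0,\dots,(m+1)^{2}\}$, so a base-$(m+1)$ radix sort in a constant number of passes deduplicates them and assigns this level's names in $O(m)$ time. After the root level, every change yields the name of the whole vector; since the target sequence is a subsequence of the intermediate vectors, its identifiers (integers in $\{0,\dots,m\}$, of size $n^{O(1)}$) are read off directly, and an easy induction on the level shows that two vectors receive the same root name iff they are equal. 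The total cost is $O(r)$ for setting up the recursion tree plus $O(m)$ per level, i.e.\ $O(r+m\log r)$.

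The subtle point — and the reason the deterministic bound avoids the extra logarithmic factor that a dynamic dictionary would incur — is precisely that the pair-naming at each level is performed offline, as one batch of $O(m)$ pairs over a universe of size $O(m^{2})$, so that plain radix sorting suffices. Making this work requires care in synchronising, in time order, the sparse change-list representations of a node and of its two children across consecutive levels; I expect this bookkeeping to be the main obstacle, while the remaining arguments are routine.
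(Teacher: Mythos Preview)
The paper does not prove this lemma; it is quoted from \cite{DBLP:conf/stacs/KociumakaRR13} and used as a black box, so there is no in-paper argument to compare against. Your sketch is nonetheless correct and follows the standard lines one would expect: polynomial fingerprinting at a random point over $\mathbb{F}_q$ for the Monte Carlo bound, and a balanced binary decomposition of the coordinate set with level-by-level batch naming (radix sort on pairs of lower-level names) for the deterministic bound. The key observation you state---that a single diff touches exactly one node per level, so each level sees at most $m$ changes and hence at most $m+1$ distinct names, all of which can be handled offline---is precisely what makes the per-level work $O(m)$ and the total $O(r+m\log r)$. The synchronisation bookkeeping you flag is real but routine.

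One small omission in the randomized part: for the Schwartz--Zippel step you need the nonzero integer vector $u-v$ to remain nonzero modulo $q$, which requires the coordinate values to be bounded below $q$ (they are bounded by $n$ in the intended application, so choosing $q=n^{\Theta(1)}$ suffices). Without this, a nonzero difference over $\mathbb{Z}$ need not yield a nonzero polynomial over $\mathbb{F}_q$, and the degree-$r$ collision bound would not apply.
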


In our setting this yields the following result.

\begin{theorem}
There exists an $O(n^2)$-time randomized algorithm (Monte Carlo, correct with high-probability) which computes
all abelian runs in a given word of length $n$. Additionally, there exists an $O(n^2\log\sigma)$-time deterministic algorithm solving the same
problem.
\end{theorem}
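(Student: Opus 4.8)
The plan is to reduce the problem to the set of anchored runs produced by the method of Section~\ref{sect-prev} and then filter it. By the algorithm of Matsuda et al., all anchored runs of $w$, each together with a witnessing periodic factorization, can be computed in $O(n^2)$ time; write $r=O(n^2)$ for their number. Since every abelian run is an anchored run with the same period (for some anchor), three things remain to be done: (i) partition the anchored runs according to their period; (ii) inside each class keep exactly the inclusion-maximal runs; and (iii) make sure each surviving fragment is reported exactly once. For step (ii) I would use Lemma~\ref{lemma-maximality}: an abelian run with period $\PV$ is never properly contained in a fragment with period $\PV$; conversely, an anchored run with period $\PV$ that fails to be an abelian run can be extended by one letter while keeping period $\PV$, and this longer fragment sits inside some anchored run with period $\PV$ (possibly with a different anchor), which therefore properly contains the original run. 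Hence, within one period class, the abelian runs are precisely the runs maximal under containment, and step (ii) becomes a purely combinatorial filtering once the classes are known.

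The heart of the argument is step (i): attaching to each anchored run a short identifier of its period so that equal identifiers correspond exactly to equal Parikh vectors. I would obtain this by naming all fragments of $w$, giving equal identifiers iff the fragments are abelian-equivalent. The naive approach of listing the Parikh vectors of all fragments costs $\Theta(n^2\sigma)$ space, which I would avoid with the diff-representation: for each length $p$ the vectors $\PV_{w[0\pp p-1]},\PV_{w[1\pp p]},\ldots,\PV_{w[n-p\pp n-1]}$ have Hamming distance at most $2$ between consecutive terms, so concatenating these $n$ sequences yields a diff-representation of total size $O(n^2)$, computable in $O(n^2)$ time. Applying the naming lemma of~\cite{DBLP:conf/stacs/KociumakaRR13} to it (the vectors have dimension $\sigma$, which we may assume is $O(n)$ after discarding letters absent from $w$) assigns identifiers of magnitude $n^{O(1)}$ in $O(\sigma+n^2\log\sigma)=O(n^2\log\sigma)$ deterministic time, or in $O(\sigma+n^2)=O(n^2)$ time by a Monte Carlo procedure correct with high probability. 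For a $k$-anchored run $(i,h,t,j)$ I would take one of its cores, $w[i+h\pp i+h+p-1]$, and use the identifier of that fragment as the identifier of its period.

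Steps (ii) and (iii) should then be routine. I would store every anchored run as a tuple $(i,h,t,j)$ tagged with the identifier of its period, and radix-sort the list lexicographically by (period identifier, starting position $i$, end position $j$ in decreasing order); since the identifiers are bounded by $n^{O(1)}$ and the positions by $n$, this costs $O(r+n)=O(n^2)$. Within one period class I would first collapse runs sharing a starting position, retaining only the longest one: by Corollary~\ref{coro-run} at most one of them can be an abelian run, and it is the longest, the others being properly contained in it; this handles step (iii). Then a single left-to-right scan by increasing $i$, maintaining the largest end position $J$ seen so far, outputs $(i,h,t,j)$ as an abelian run exactly when $j>J$, and discards it otherwise (a discarded run is properly contained in an earlier run of the same period); this realizes step (ii). All of this is $O(r+n)=O(n^2)$.

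Summing up, the running time is $O(n^2)$ apart from the naming step, which is $O(n^2)$ in the randomized version and $O(n^2\log\sigma)$ in the deterministic version, yielding both parts of the statement. I expect the one genuinely delicate point to be step (i): grouping the anchored runs by period fast and in $o(n^2\sigma)$ space, which is exactly what forces the use of the diff-representation and the naming lemma rather than a direct comparison of Parikh vectors.
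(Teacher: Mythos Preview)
Your proposal is correct and follows essentially the same approach as the paper: compute all anchored runs via Matsuda et al., assign abelian-equivalence identifiers to all fragments using the diff-representation together with the naming lemma of~\cite{DBLP:conf/stacs/KociumakaRR13}, group anchored runs by period identifier, sort within each group by starting position, and retain the inclusion-maximal ones. The paper leaves the final filtering step at the level of ``it is easy to retrieve the abelian runs of that period with a single scan of the list'', whereas you spell out the radix sort, the collapse of runs sharing a starting position, and the left-to-right scan maintaining the maximum right endpoint; these details are consistent with what the paper intends.
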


\section{Conclusions}\label{sect-conc}

We gave algorithms that, given a word $w$ of length $n$ over an alphabet of cardinality $\sigma$,
 return all the abelian runs of a given period $\PV$ in $w$ in time $O(n)$ and space $O(\sigma+p)$, or all the abelian runs with periods of a given norm $p$ in time $O(np)$ and space $O(p(\sigma+p))$.
These algorithms work in an online manner.
We also presented an $O(n^2)$ (resp. $O(n^2\log n)$)-time offline
 randomized (resp. deterministic) algorithm for computing all
 the abelian runs in a word of length~$n$.
One may wonder if it is possible to reduce further the complexities of these latter algorithms.
We believe that further combinatorial results on the structure of the abelian runs in a word could lead to novel solutions.


\begin{thebibliography}{10}
\expandafter\ifx\csname url\endcsname\relax
  \def\url#1{\texttt{#1}}\fi
\expandafter\ifx\csname urlprefix\endcsname\relax\def\urlprefix{URL }\fi
\expandafter\ifx\csname href\endcsname\relax
  \def\href#1#2{#2} \def\path#1{#1}\fi

\bibitem{Smy2013}
W.~F. Smyth, Computing regularities in strings: a survey, European Journal of
  Combinatorics 34~(1) (2013) 3--14.
\newblock \href {http://dx.doi.org/10.1016/j.ejc.2012.07.010}
  {\path{doi:10.1016/j.ejc.2012.07.010}}.

\bibitem{KK99}
R.~Kolpakov, G.~Kucherov, Finding maximal repetitions in a word in linear time,
  in: 40th Annual Symposium on Foundations of Computer Science, FOCS 1999, IEEE
  Computer Society, New-York, 1999, pp. 596--604.
\newblock \href {http://dx.doi.org/10.1109/SFFCS.1999.814634}
  {\path{doi:10.1109/SFFCS.1999.814634}}.

\bibitem{B14}
H.~Bannai, T.~I, S.~Inenaga, Y.~Nakashima, M.~Takeda, K.~Tsuruta, A new
  characterization of maximal repetitions by {Lyndon} trees, in: P.~Indyk
  (Ed.), Proceedings of the Twenty-Sixth Annual {ACM-SIAM} Symposium on
  Discrete Algorithms, {SODA} 2015, San Diego, CA, USA, January 4-6, 2015,
  {SIAM}, 2015, pp. 562--571.
\newblock \href {http://dx.doi.org/10.1137/1.9781611973730.38}
  {\path{doi:10.1137/1.9781611973730.38}}.

\bibitem{B14bis}
H.~Bannai, T.~I, S.~Inenaga, Y.~Nakashima, M.~Takeda, K.~Tsuruta,
  \href{http://arxiv.org/abs/1406.0263v7}{The ``runs'' theorem}, CoRR
  abs/1406.0263v7.
\newline\urlprefix\url{http://arxiv.org/abs/1406.0263v7}

\bibitem{CI2006}
S.~Constantinescu, L.~Ilie, {F}ine and {W}ilf's theorem for abelian periods,
  Bulletin of the European Association for Theoretical Computer Science 89
  (2006) 167--170.

\bibitem{matsuda2014computing}
S.~Matsuda, S.~Inenaga, H.~Bannai, M.~Takeda, Computing abelian covers and
  abelian runs, in: J.~Holub, J.~Zd{\'{a}}rek (Eds.), Prague Stringology
  Conference, PSC 2014, Czech Technical University in Prague, 2014, pp. 43--51.

\bibitem{FLLP15}
G.~Fici, T.~Lecroq, A.~Lefebvre, {\'{E}}.~Prieur{-}Gaston, Online computation
  of abelian runs, in: A.~H. Dediu, E.~Formenti, C.~Mart{\'{\i}}n{-}Vide,
  B.~Truthe (Eds.), Language and Automata Theory and Applications, {LATA} 2015,
  Vol. 8977 of LNCS, Springer International Publishing, 2015, pp. 391--401.
\newblock \href {http://dx.doi.org/10.1007/978-3-319-15579-1_30}
  {\path{doi:10.1007/978-3-319-15579-1_30}}.

\bibitem{Cummings_weakrepetitions}
L.~J. Cummings, W.~F. Smyth, Weak repetitions in strings, {Journal of
  Combinatorial Mathematics and Combinatorial Computing} 24 (1997) 33--48.

\bibitem{hashing}
M.~Dietzfelbinger, A.~R. Karlin, K.~Mehlhorn, F.~{Meyer auf der Heide},
  H.~Rohnert, R.~E. Tarjan, Dynamic perfect hashing: Upper and lower bounds,
  {SIAM} Journal on Computing 23~(4) (1994) 738--761.
\newblock \href {http://dx.doi.org/10.1137/S0097539791194094}
  {\path{doi:10.1137/S0097539791194094}}.

\bibitem{DBLP:conf/stacs/KociumakaRR13}
T.~Kociumaka, J.~Radoszewski, W.~Rytter, Fast algorithms for abelian periods in
  words and greatest common divisor queries, in: N.~Portier, T.~Wilke (Eds.),
  30th International Symposium on Theoretical Aspects of Computer Science,
  {STACS} 2013, Vol.~20 of LIPIcs, Schloss Dagstuhl - Leibniz-Zentrum fuer
  Informatik, 2013, pp. 245--256.
\newblock \href {http://dx.doi.org/10.4230/LIPIcs.STACS.2013.245}
  {\path{doi:10.4230/LIPIcs.STACS.2013.245}}.

\end{thebibliography}
\end{document}